\DeclareMathOperator*{\argmax}{arg\,max}
\newcommand{\Real}{\mathbb{R}}
\newcommand{\veca}{\bm{a}}
\newcommand{\vecb}{\bm{b}}
\newcommand{\vecx}{\bm{x}}
\newcommand{\vecy}{\bm{y}}
\newcommand{\vecz}{\bm{z}}
\newcommand{\vecw}{\bm{w}}
\newcommand{\vecl}{{\bm{l}}}
\newcommand{\vecr}{\bm{r}}
\newcommand{\vecs}{\bm{s}}
\newcommand{\vecu}{{\bm{u}}}
\newcommand{\vecg}{\bm{g}}
\newcommand{\vecalpha}{\bm{\alpha}}
\newcommand{\norm}[1]{\left\lVert #1 \right\rVert}
\newcommand{\ip}[2]{\langle #1, #2 \rangle}
\newtheorem{assumption}[theorem]{Assumption}
\begin{document}
\title{Improved algorithms for online load balancing\thanks{Supported by organization x.}}
%

\author{
Yaxiong Liu\inst{1,3}
 \and Kohei Hatano\inst{2,3} \and  Eiji Takimoto\inst{1} \\
 \institute{Department of Informatics, Kyushu University, Japan\and Faculty of Arts and Science, Kyushu University, Japan \and RIKEN AIP, Japan}
 \email{\{yaxiong.liu,hatano,eiji\}@inf.kyushu-u.ac.jp\inst{1}\inst{2}\inst{3} }
}

\authorrunning{Yaxiong Liu \and Kohei Hatano \and Eiji Takimoto}

\maketitle              
\begin{abstract}
We consider an online load balancing problem and its extensions in the
framework of repeated games.
On each round, the player chooses a distribution (task allocation)
over $K$ servers, and then
the environment reveals the load of each server, which determines
the computation time of each server for processing the task assigned.
After all rounds, the cost of the player is measured by
some norm of the cumulative computation-time vector.
The cost is the makespan if the norm is $L_\infty$-norm.
The goal is to minimize the regret, i.e., minimizing
the player's cost relative to
the cost of the best fixed distribution in hindsight.
We propose algorithms for general norms and prove their regret bounds.
In particular, for $L_\infty$-norm, our regret bound matches the best
known bound and the proposed algorithm runs in
polynomial time per trial involving linear programming and second order
programming, whereas no polynomial time algorithm was previously known
to achieve the bound.

\keywords{online learning  \and blackwell approachability \and online load balancing \and makespan \and second order cone programming.}
\end{abstract}
\section{Introduction}
We consider an online load balancing problem defined as follows.
There are $K$ parallel servers and the protocol is defined as a game
between the player and the environment.
On each round $t=1,\dots, T$,
(i) the player selects a distribution $\vecalpha_{t}$ over $K$ servers, which can be
viewed as an allocation of data, (ii) then the environment assigns
a loaded condition $l_{t,i}$ for each server $i$ and the loss of  server $i$ is given
as $\alpha_{t,i}l_{t,i}$.
The goal of the player is to minimize the makespan of the cumulative loss vector of
all servers after $T$ rounds, i.e., $\max_{i=1,\dots,K} \sum_{t=1}^T
\alpha_{t,i}l_{t,i}$, compared relatively to the makespan obtained by
the optimal static allocation $\vecalpha^*$ in hindsight.
More precisely, the goal is to minimize the regret, the difference
between the player's makespan and the static optimal makespan.
The makespan cost can be viewed as  $L_\infty$-norm of the vector of
cumulative loss of each server (we will give a formal definition of the problem in the next section).

In traditional literature the measurement of an algorithm is always competitive ratio(e.g.,\cite{azar:bookchap98} \cite{molinaro2017online}).
In our paper we utilize another well-known measurement as ``Regret'' defined in later section.
Even-Dar et al.\cite{even2009online} gave an algorithm based on the regret
minimum framework by involving an extra concept, the Blackwell
approachability \cite{blackwell1956analog} with respect to $L_{2}$-norm,
to a target set, which is defined in the following section. This algorithm achieves the regret bound as $O(\sqrt{KT}).$
Simultaneously another algorithm, DIFF, achieves the regret upper bound as $O((\ln K)\sqrt{T}).$
Rahklin et al. \cite{rakhlin2011online} gave a theoretical result for
the online load balancing problem, that the upper bound to regret can
achieve $O(\sqrt{(\ln K)T}),$ rather than $O((\ln K)\sqrt{T}).$
However there is no efficient algorithm given in this paper to obtain
the regret.

In following years,
there were some new explorations about the equivalence
between the Blackwell approachability and online linear optimization(OLO)
\cite{abernethy2011blackwell},
in addition and online convex optimization(OCO) by involving a support
function \cite{shimkin2016online}.

These work \cite{abernethy2011blackwell} \cite{shimkin2016online} implied that the Blackwell approachability can be given by general norm by reducing Blackwell approaching game
to an OCO problem. Moreover due to this result we give an efficient algorithm to online load balancing problem, achieving the best known regret.

More specifically speaking, we propose algorithms for online load balancing for
arbitrary norms under a natural assumption.
And our technical contributions are the following:
\begin{itemize}
\item 1. We propose a new reduction technique from online load balancing
      to a Blackwell approaching game. This reduction enables us to use
      more general norms than $L_2$-norm or $L_\infty$-norm used in the
      previous work.
      Then, by using the reduction technique of
      Shimkin~\cite{shimkin2016online} from Blackwell games to online
      linear optimization, we reduce online load balancing to online
      linear optimization.

\item 2. Especially we give an efficient algorithm for online load
      balancing w.r.t.~$L_\infty$-norm, achieving the best known $O(\sqrt{T\ln K})$
      regret. The algorithm involves linear programming and the second
      order cone programming and runs in polynomial time per trial. This
      is the first polynomial time algorithm achieving $O(\sqrt{T\ln
      K})$ regret.

\end{itemize}

This paper is organized as follows. In section 2 we introduce the basic
definitions in this paper like online load balancing problem, Blackwell
approachability game and online convex optimization. Next in section 3
we give a meta algorithm
for online load balancing with respect to any norm under a natural assumption. Then in section 4 we give some details in
implementation of the algorithm for $L_\infty$-norm.

\section{Preliminaries}
First we give some notations.
We use $\norm{\cdot}$ to denote a norm of a vector.
More specifically,
for a vector $\vecx = (x_1, x_2, \ldots, x_d) \in \Real^d$
and a real number $p \geq 1$, the $L_p$-norm of $\vecx$ is denoted by
$\norm{\vecx}_p = \left(\sum_{i=1}^{d} |x_i|^p \right)^{1/p}$.
In particular, the $L_\infty$-norm of $\vecx$ is
$\norm{\vecx}_\infty = \max_i |x_i|$.
Moreover, for a norm $\norm{\cdot}$,
$\norm{\vecx}_*$ denotes the dual norm of $\norm{\vecx}$,
where
$\norm{\vecx}_* = \sup \{\langle \bm{x},\bm{z}\rangle
	\mid \norm{\vecz} \leq 1 \}$.
A norm $\norm{\cdot}$ over $\Real^d$ is monotone if
$\norm{\vecx} \leq \norm{\vecy}$ whenever
$|x_i| \leq |y_i|$ for every $1 \leq i \leq d$.
Note that $L_p$-norm is monotone for any $p \geq 1$.

\subsection{Online load balancing}

Firstly we begin with a standard (offline) load balancing
problem.
Suppose that we have $K$ servers to do
a simple task with a large amount of data.
The task can be easily parallelized in such a way that
we can break down the data into $K$ pieces and assign them to
the servers, and then each server processes the subtask in time
proportional to the size of data assigned.
An example is to find blacklisted IP addresses in
an access log data.
Each server is associated with loaded condition,
expressed in terms of ``the computation time per unit data''.
The goal is to find a data assignment to the servers
so as to equalize the computation time for all servers.
In other words, we want to minimize the \emph{makespan}, defined
as the maximum of the computation time over all servers.

Formally, the problem is described as follows:
The input is a $K$-dimensional vector
$\bm{l}=(l_1, l_2, \ldots, l_K) \in \Real_+^K$, where each $l_i$
represents the loaded condition of the $i$-th server.
The output is a $K$-dimensional probability vector
$\vecalpha = (\alpha_1, \alpha_2, \ldots, \alpha_K) \in
\Delta(K) = \{ \vecalpha \in [0,1]^K \mid \sum_{i=1}^K \alpha_i = 1 \}$,
where each $\alpha_i$ represents the fraction of data assigned to
the $i$-th server.
The goal is to minimize the makespan
$\norm{\vecalpha \odot \vecl}_\infty$, where
$\vecalpha \odot \vecl = (\alpha_1 l_1, \alpha_2 l_2, \ldots, \alpha_K l_K)$.
Note that it is clear that the optimal solution is given by
$\alpha_i = l_i^{-1}/\sum_{j=1}^K l_j^{-1}$, which equalizes
the computation time of every server as
\[
	C_\infty^*(\vecl) \stackrel{\text{def}}{=}
	\min_{\vecalpha \in \Delta(K)} \norm{\vecalpha \odot \vecl}_\infty
	= \frac{1}{\sum_{j=1}^K 1/l_j}.
\]
Note also that the objective is generalized to the $L_p$-norm
for any $p$ in the literature.

In this paper, we consider a more general objective
$\norm{\vecalpha \odot \vecl}$ for an arbitrary norm
that satisfies certain assumptions stated below.
In the general case, the optimal value is denoted by
\[
	C^*(\vecl) \stackrel{\text{def}}{=}
	\min_{\vecalpha \in \Delta(K)} \norm{\vecalpha \odot \vecl}.
\]
\begin{assumption}
Throughout the paper, we put the following assumptions on the norm.
\label{assumption:norm}
\begin{enumerate}
\item The norm $\norm{\cdot}$ is monotone, and
\item The function $C^*$ is concave.
\end{enumerate}
\end{assumption}
Note that the first assumption is natural for
load balancing and the both assumptions are satisfied by
$L_p$-norm for $p > 1$.

Now we proceed to the online load balancing problem with respect to
a norm $\norm{\cdot}$ that satisfies Assumption~\ref{assumption:norm}.
The problem is described as a repeated game between the learner
and the environment who may behave adversarially.
In each round $t = 1, 2, \ldots, T$, the learner
chooses an assignment vector
$\vecalpha_t \in \Delta(K)$, and then receives from the environment
a loaded condition vector $\vecl_t \in [0,1]^K$, which may
vary from round to round.
After the final round is over, the performance of the learner is
naturally measured by
$\norm{\sum_{t=1}^T \vecalpha_t \odot \vecl_t}$.
We want to make the learner perform nearly as well as
the performance of the best fixed assignment in hindsight
(offline optimal solution), which is given by $C^*(\sum_{t=1}^T \vecl_t)$.
To be more specific,
the goal is to minimize the following \emph{regret}:
\[
	\text{Regret}(T) = \norm{\sum_{t=1}^T \vecalpha_t \odot \vecl_t}
		- C^*\left(\sum_{t=1}^T \vecl_t\right).
\]

\subsection{Repeated game with vector payoffs and
approachability}

We briefly review the notion of Blackwell's approachability,
which is defined for a repeated game with vector payoffs.
The game is specified by a tuple
$(A, B, r, S, \text{dist})$, where
$A$ and $B$ are convex and compact sets,
$r:A \times B \to \Real^d$ is a vector-valued payoff function,
$S \subseteq \Real^d$ is a convex and closed set called
the \emph{target set}, and
$\text{dist}:\Real^d \times \Real^d \to \Real_+$ is a metric.
The protocol proceeds in trials:
In each round $t=1, 2, \ldots, T$,
the learner chooses a vector $\veca_t \in A$,
the environment chooses a vector $\vecb_t \in B$, and then
the learner obtains a vector payoff $\vecr_t \in \Real^d$, given by
$\vecr_t = r(\veca_t, \vecb_t)$.
The goal of the learner is to make the average payoff vector
arbitrarily close to the target set $S$.

\begin{definition}[Approachability]
For a game $(A,B,r,S,\mathrm{dist})$, the target set $S$ is
approachable with convergence rate $\gamma(T)$ if
there exists an algorithm for the learner such that
the average payoff $\bar \vecr_T = (1/T) \sum_{t=1}^T \vecr_t$
satisfies
\[
	\mathrm{dist}(\bar \vecr_T,S) \stackrel{\mathrm{def}}{=}
	\min_{\vecs \in S} \mathrm{dist}(\bar \vecr_T, \vecs) \leq \gamma(T)
\]
against any environment.
In particular, we simply say that $S$ is approachable if
it is approachable with convergence rate $o(T)$.
\end{definition}

Blackwell characterizes the approachability in terms of the
support function as stated in the proposition below.

\begin{definition}
For a set $S \subseteq \Real^d$,
the support function $h_S: \Real^d \to \Real \cup \{\infty\}$
is defined as
\[
	h_S(\vecw) = \sup_{\vecs \in S} \ip{\vecs}{\vecw}.
\]
\end{definition}
It is clear from definition that $h_S$ is convex whenever
$S$ is convex.

\begin{definition}[Blackwell~\cite{blackwell1956analog}]
A game $(A,B,r,S,\mathrm{dist})$ satisfies Blackwell Condition,
if and only if
\begin{equation}
\label{eq:BlackwellCondition}
	\forall \vecw \in \Real^d \;
	\left( \min_{\veca \in A} \min_{\vecb \in B}
		\ip{\vecw}{r(\veca, \vecb)} \leq h_S(\vecw)
	\right).
\end{equation}
\end{definition}

\begin{remark}
In \cite{blackwell1956analog}, Blackwell characterized the approachability of a target set for $L_{2}$-norm metric in terms of the Blackwell condition.
\end{remark}

In what follows, we only consider a norm metric, i.e,
$\mathrm{dist}(\vecr,\vecs) = \norm{\vecr - \vecs}$
for some norm $\norm{\cdot}$ over $\Real^d$.
The following proposition is useful.


\begin{proposition} \label{prop:subgradient-support}
For any $\vecw \in \Real^d$,
$\vecs^* = \arg\max_{\vecs \in S} \ip{\vecs}{\vecw}$
is a sub-gradient of $h_S(\vecw)$ at $\vecw$.
\end{proposition}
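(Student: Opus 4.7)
The plan is to verify the subgradient inequality directly from the definition of $h_S$ and the definition of $\vecs^*$. Recall that $\vecs^*$ is a subgradient of $h_S$ at $\vecw$ precisely when
\[
    h_S(\vecw') \geq h_S(\vecw) + \ip{\vecs^*}{\vecw' - \vecw}
    \quad \text{for every } \vecw' \in \Real^d,
\]
so this is the inequality I would establish.

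First I would use the assumption that the argmax is attained to write $h_S(\vecw) = \ip{\vecs^*}{\vecw}$. Substituting this into the right-hand side of the desired inequality collapses the two inner products and reduces the claim to
\[
    h_S(\vecw') \;\geq\; \ip{\vecs^*}{\vecw'}.
\]
Since $\vecs^* \in S$, this bound is immediate from the definition $h_S(\vecw') = \sup_{\vecs \in S} \ip{\vecs}{\vecw'}$, because the supremum is taken over a set containing $\vecs^*$. Rearranging gives the subgradient inequality, completing the proof.

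There is essentially no obstacle here beyond confirming that $\arg\max_{\vecs \in S}\ip{\vecs}{\vecw}$ is nonempty, which is ensured in the settings of interest (for instance when $S$ is compact, or more generally whenever the supremum defining $h_S(\vecw)$ is attained). The argument is a one-line application of the definition of the support function together with convex-analytic definition of subgradient, and does not require any further structural assumption on $S$ beyond convexity.
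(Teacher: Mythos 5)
Your proof is correct and is essentially the same argument as the paper's: both reduce the subgradient inequality to the observation that $h_S(\vecw) = \ip{\vecs^*}{\vecw}$ at the point of attainment while $h_S(\vecw') \geq \ip{\vecs^*}{\vecw'}$ for any other $\vecw'$ (the paper phrases the latter via the argmax at $\vecw'$, which is the same bound). Your added remark about attainment of the supremum is a fair point of rigor but does not change the substance.
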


\begin{proof}
For any $\vecw, \vecu \in \Real^d$, let
$\vecs^* = \argmax_{\vecs \in S} \ip{\vecs}{\vecw}$ and
$\vecs^\vecu = \argmax_{\vecs \in S} \ip{\vecs}{\vecu}$.
Since $\ip{\vecs^*}{\vecu} \leq \ip{\vecs^\vecu}{\vecu}$, we have
\begin{align*}
	h_S(\vecw) - h_S(\vecu) & =
	\sup_{\vecs \in S} \ip{\vecs}{\vecw}
		- \sup_{\vecs \in S}\ip{\vecs}{\vecu}
		= \ip{\vecs^*}{\vecw} - \ip{\vecs^\vecu}{\vecu} \\
	& \leq
	\ip{\vecs^*}{\vecw - \vecu},
\end{align*}
which implies the proposition.\qed

\end{proof}

\subsection{Online convex optimization}
\label{sec:OCO}

In this subsection we briefly review online convex optimization
with some known results. See, e.g., \cite{shalev2012online,hazan:book16} for more details.

An online convex optimization (OCO) problem is specified by
$(W, F)$, where $W \subseteq \Real^d$ is a compact convex set
called the decision set and
$F \subseteq \{f:W \to \Real \}$ is a set of convex functions over $W$
called the loss function set.
The OCO problem $(W,F)$ is described by the following protocol between
the learner and the adversarial environment.
For each round $t=1,2,\ldots,T$,
the learner chooses a decision vector $\vecw_t \in W$ and then
receives from the environment a loss function $f_t \in F$.
In this round, the learner incurs the loss given by $f_t(\vecw_t)$.
The goal is to make the cumulative loss of the learner nearly as
small as the cumulative loss of the best fixed decision.
To be more specific, The goal is to minimize the following regret:
\[
	\mathrm{Regret}_{(W,F)}(T) = \sum_{t=1}^T f_t(\vecw_t)
		- \min_{\vecw \in W} \sum_{t=1}^T f_t(\vecw).
\]
Here we add the subscript $(W,F)$ to distinguish from the regret
for online load balancing.

Any OCO problem can be reduced to
an online linear optimization (OLO) problem, which is an OCO
problem with linear loss functions.
More precisely, an OLO problem is
specified by $(W,G)$, where $G \subseteq \Real^d$ is the set
of cost vectors such that the loss function at round $t$ is
$\ip{\vecg_t}{\cdot}$ for some cost vector $\vecg_t \in G$.
For the OLO problem $(W,G)$, the regret of the learner is thus given by
\[
	\mathrm{Regret}_{(W,G)}(T) = \sum_{t=1}^T \ip{\vecg_t}{\vecw_t}
		- \min_{\vecw \in W} \sum_{t=1}^T \ip{\vecg_t}{\vecw}.
\]
The reduction from OCO to OLO is simple.
Run any algorithm for OLO $(W,G)$ with
$\vecg_t \in \partial f_t(\vecw_t)$, and then it achieves
$\mathrm{Regret}_{(W,F)}(T) \leq \mathrm{Regret}_{(W,G)}(T)$,
provided that $G$ is large enough, i.e.,
$G \supseteq \bigcup_{f \in F, \vecw \in W} \partial f(\vecw)$.

A standard FTRL (follow-the-regularized-leader) strategy
for the OLO problem $(W,G)$ is to choose $\vecw_t$ as
\begin{equation}
\label{eq:FTRL}
	\vecw_t = \arg\min_{\vecw \in W} \left(
		\sum_{s=1}^{t-1} \ip{\vecg_s}{\vecw} +
		\eta_t R(\vecw)
	\right),
\end{equation}
where $R:W \to \Real$ is a strongly convex function called the
regularizer and $\eta_t \in \Real_+$ is a parameter.
Using the strategy (\ref{eq:FTRL})
the following regret bound is known.

\begin{proposition}[\cite{shalev2012online}]
\label{prop:FTRL}
 Suppose that the regularizer $R:W \to \Real$ is $\sigma$-strongly convex
  w.r.t. some norm $\|\cdot\|$, i.e., for any $\vecw, \vecu \in W$, for any
  $\vecz \in \partial R(\vecw)$, $R(\vecu) \geq R(\vecw) + \langle
  \vecz,\vecu -\vecw\rangle + \frac{\sigma}{2}\|\vecu -\vecw\|^2
  $.
Then, for the OLO problem $(W,G)$, the regret of the strategy
(\ref{eq:FTRL}) satisfies
\[
	\mathrm{Regret}_{(W,G)}(T) = O(D_R L_G\sqrt{T/\sigma}),
\]
  where $D_R = \sqrt{\max_{\vecw \in W}R(\vecw)}$, $L_G=\max_{\vecg \in
  G}\|\vecg\|_*$ and $\eta_t=(L_G/D_R)\sqrt{T/\sigma}$.
\end{proposition}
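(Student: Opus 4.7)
The plan is to follow the standard Follow-The-Regularized-Leader analysis in two main stages: first a ``Be-The-Leader'' type inequality that converts the regret into a telescoping plus a stability term, and then a strong convexity argument that controls the stability term.

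First I would apply the well-known regularized Be-The-Leader lemma. Define the cumulative regularized potential $\Phi_t(\vecw) = \sum_{s=1}^{t-1}\ip{\vecg_s}{\vecw} + \eta R(\vecw)$, where we note the stated step size $\eta_t = (L_G/D_R)\sqrt{T/\sigma}$ is in fact constant in $t$, so write it as $\eta$. Since $\vecw_t$ minimizes $\Phi_t$, an induction on $T$ (the standard FTL lemma applied to the sequence with an added $\eta R$ as a ``round zero'' loss) yields, for every comparator $\vecu \in W$,
\[
  \sum_{t=1}^T \ip{\vecg_t}{\vecw_{t+1}} + \eta R(\vecw_1)
  \leq \sum_{t=1}^T \ip{\vecg_t}{\vecu} + \eta R(\vecu),
\]
which rearranges to
\[
  \mathrm{Regret}_{(W,G)}(T) \leq \eta \bigl(R(\vecu) - R(\vecw_1)\bigr)
  + \sum_{t=1}^T \ip{\vecg_t}{\vecw_t - \vecw_{t+1}}.
\]

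Next I would bound the per-step movement $\|\vecw_t - \vecw_{t+1}\|$ using strong convexity. Observe that $\Phi_t$ is $\sigma\eta$-strongly convex with respect to $\|\cdot\|$, and $\Phi_{t+1}(\vecw) = \Phi_t(\vecw) + \ip{\vecg_t}{\vecw}$. Using the first-order optimality of $\vecw_t$ for $\Phi_t$ and of $\vecw_{t+1}$ for $\Phi_{t+1}$, together with the strong convexity inequality applied at both points and summed, a short calculation (the standard ``two-point strong convexity'' trick) gives
\[
  \sigma\eta \,\|\vecw_t - \vecw_{t+1}\|^2
  \leq \ip{\vecg_t}{\vecw_t - \vecw_{t+1}}
  \leq \|\vecg_t\|_* \,\|\vecw_t - \vecw_{t+1}\|,
\]
so $\|\vecw_t - \vecw_{t+1}\| \leq \|\vecg_t\|_*/(\sigma\eta) \leq L_G/(\sigma\eta)$, and consequently $\ip{\vecg_t}{\vecw_t - \vecw_{t+1}} \leq L_G^2/(\sigma\eta)$.

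Plugging these bounds back and using $R(\vecu) - R(\vecw_1) \leq \max_{\vecw \in W} R(\vecw) = D_R^2$, we obtain
\[
  \mathrm{Regret}_{(W,G)}(T) \leq \eta D_R^2 + \frac{T L_G^2}{\sigma\eta}.
\]
Finally I would choose $\eta$ to balance the two terms; the minimizer is exactly $\eta = (L_G/D_R)\sqrt{T/\sigma}$, yielding $\mathrm{Regret}_{(W,G)}(T) \leq 2 D_R L_G \sqrt{T/\sigma}$, which is the claimed $O(D_R L_G \sqrt{T/\sigma})$ bound. The only delicate step is the strong-convexity computation that controls $\|\vecw_t - \vecw_{t+1}\|$; everything else is bookkeeping and the arithmetic of balancing $\eta$.
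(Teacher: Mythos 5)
Your argument is correct and is essentially the standard one: the paper does not prove Proposition~\ref{prop:FTRL} itself but cites \cite{shalev2012online}, and that analysis is exactly your decomposition --- the regularized be-the-leader inequality, the two-point strong-convexity bound giving $\ip{\vecg_t}{\vecw_t-\vecw_{t+1}} \leq L_G^2/(\sigma\eta)$, and the balancing choice of the constant $\eta=(L_G/D_R)\sqrt{T/\sigma}$. The only implicit assumption you use, which the statement itself also makes through $D_R=\sqrt{\max_{\vecw\in W}R(\vecw)}$, is that $R$ is nonnegative (so that $R(\vecu)-R(\vecw_1)\leq D_R^2$); with that usual normalization your proof is complete.
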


Note however that the strategy does not consider the
computational feasibility at all.
For efficient reduction, we need an efficient algorithm
that computes a sub-gradient $\vecg \in \partial f(\vecw)$
when given (a representation of) $f \in F$ and $w \in W$,
and an efficient algorithm
for solving the convex optimization problem (\ref{eq:FTRL}).

For a particular OLO problem $(W,G)$ with $L_1$ ball decision set
$W = \{ \vecw \in \Real^d \mid \norm{\vecw}_1 \leq 1 \}$,
an algorithm called EG$^\pm$~\cite{hoeven2018many} finds
in linear time the optimal solution of (\ref{eq:FTRL}) with an entropic
regularizer and achieves the following regret.

\begin{theorem}[\cite{kivinen1997exponentiated}]
\label{theorem:EG-plus-minus}
For the OLO problem $(W,G)$ with
$W = \{ \vecw \in \Real^d \mid \norm{\vecw}_1 \leq 1 \}$
and $G = \{ \vecg \in \Real^d \mid \norm{\vecg}_\infty \leq M \}$,
EG$^\pm$ achieves
\[
	\mathrm{Regret}_{(W,G)}(T) \leq M\sqrt{2T \ln (2d)}.
\]
\end{theorem}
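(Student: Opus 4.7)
The plan is to reduce the given OLO problem to a Hedge-style experts problem on a probability simplex, where the classical exponentiated-gradient analysis delivers the claimed bound. First I would lift every $\vecw \in W = \{\vecw \in \Real^d : \norm{\vecw}_1 \leq 1\}$ to $\tilde\vecw = (\vecw^+, \vecw^-, 1 - \norm{\vecw}_1) \in \Delta(2d+1)$, where $\vecw^+, \vecw^- \in \Real_+^d$ are the positive and negative parts of $\vecw$; this map is a bijection of $W$ onto $\Delta(2d+1)$. Correspondingly I would lift each cost vector to $\tilde\vecg_t = (\vecg_t, -\vecg_t, 0) \in \Real^{2d+1}$, so that $\ip{\tilde\vecg_t}{\tilde\vecw} = \ip{\vecg_t}{\vecw^+} - \ip{\vecg_t}{\vecw^-} = \ip{\vecg_t}{\vecw}$ for every $\vecw \in W$, while $\norm{\tilde\vecg_t}_\infty = \norm{\vecg_t}_\infty \leq M$. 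The regret on $(W,G)$ therefore equals the regret of the lifted learner on the experts problem $(\Delta(2d+1), \tilde G)$.

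I would then instantiate the FTRL rule~(\ref{eq:FTRL}) on the lifted problem with the shifted negative-entropy regularizer $R(\tilde\vecw) = \sum_{i=1}^{2d+1} \tilde w_i \ln \tilde w_i + \ln(2d+1)$, which is $1$-strongly convex w.r.t.~the $L_1$-norm on the simplex by Pinsker's inequality, is nonnegative, and attains $\max R = \ln(2d+1)$ at the vertices. The closed-form minimizer of~(\ref{eq:FTRL}) is the exponential-weights vector $\tilde w_{t,i} \propto \exp(-\eta_t^{-1} \sum_{s<t} \tilde g_{s,i})$, which is exactly the EG$^\pm$ update on the $L_1$-ball (the extra coordinate simply tracks the unused mass $1-\norm{\vecw_t}_1$) and is computable in $O(d)$ time per round. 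Substituting $D_R = \sqrt{\ln(2d+1)}$, $L_G = M$, and $\sigma = 1$ into Proposition~\ref{prop:FTRL} already yields a regret bound of order $M\sqrt{T\ln(2d)}$.

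The main obstacle is that Proposition~\ref{prop:FTRL} only delivers an $O(\cdot)$ bound and does not pin down the explicit constant $\sqrt{2}$ appearing in the statement. To sharpen the constant I would instead carry out the classical Kivinen--Warmuth~\cite{kivinen1997exponentiated} potential-function analysis directly on the lifted problem: track the log-partition potential $\Psi_t = -\eta^{-1}\ln\sum_i \exp(-\eta \sum_{s \leq t} \tilde g_{s,i})$; apply Hoeffding's lemma to the bounded random variable $\tilde g_{t,\cdot}$ drawn according to $\tilde\vecw_t$ to lower-bound each one-step change by $\Psi_t - \Psi_{t-1} \geq \ip{\tilde\vecg_t}{\tilde\vecw_t} - \eta M^2/2$; telescope over $t=1,\dots,T$; combine with $\Psi_0 = -\eta^{-1}\ln(2d+1)$ and the trivial bound $\Psi_T \leq \min_i \sum_t \tilde g_{t,i}$; and finally tune $\eta = \sqrt{2\ln(2d+1)/(TM^2)}$. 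After standard simplification this delivers the claimed $M\sqrt{2T\ln(2d)}$.
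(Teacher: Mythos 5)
The paper does not actually prove this statement---it is imported verbatim from Kivinen--Warmuth (the EG$^\pm$ bound), so there is no in-paper argument to compare against; your reconstruction follows the standard route (reduce the $L_1$-ball to an experts problem, run exponential weights, and bound the log-partition potential via Hoeffding's lemma), and that skeleton is sound. Two points need repair, though. First, the map $\vecw \mapsto (\vecw^+,\vecw^-,1-\norm{\vecw}_1)$ is \emph{not} a bijection of $W$ onto the simplex: it is injective but misses every simplex point whose $i$-th positive and negative coordinates are both strictly positive. This is harmless---what you actually need, and what is true, is that losses are preserved in both directions ($\ip{\tilde\vecg_t}{\tilde\vecw}=\ip{\vecg_t}{\vecw}$ for lifted comparators, and any simplex point maps back to some $\vecw\in W$ with the same loss via $w_i=\tilde w_i-\tilde w_{d+i}$), so the simplex regret upper-bounds the $(W,G)$ regret---but you should state it that way rather than claim surjectivity.

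Second, and this is the genuine gap: by appending a slack coordinate you work on $\Delta_{2d+1}$, so your potential starts at $\Psi_0=-\eta^{-1}\ln(2d+1)$ and the optimized bound is $M\sqrt{2T\ln(2d+1)}$, which is strictly larger than the claimed $M\sqrt{2T\ln(2d)}$; as written the argument does not establish the stated inequality. The slack coordinate is unnecessary: any $\vecw$ with $\norm{\vecw}_1\le 1$ can be written as $\vecw=\vecu^+-\vecu^-$ with $\vecu^+,\vecu^-\ge 0$ and $\sum_i(u^+_i+u^-_i)=1$, by adding the leftover mass $(1-\norm{\vecw}_1)/2$ to both the positive and the negative copy of a single coordinate (the two additions cancel in the difference). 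With this representation the experts problem lives on $\Delta_{2d}$, the same Hoeffding/telescoping computation with $\eta=\sqrt{2\ln(2d)/(TM^2)}$ gives exactly $M\sqrt{2T\ln(2d)}$, and the resulting update is precisely the EG$^\pm$ algorithm over $2d$ weights. With those two corrections your proof is complete and matches the cited result.
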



\section{Main result}

In this section, we propose a meta-algorithm for online load balancing,
which is obtained by combining a reduction to two independent OLO
problems and an OLO algorithm (as an oracle) for the reduced problems.
Note that the reduced OLO problems depend on the choice of norm
for online load balancing, and the OLO problems are further
reduced to some optimization problems defined in terms of the norm.
For efficient implementation, we assume that the optimization
problems are efficiently solved.

Now we consider the online load balancing problem on $K$ servers
with respect to a norm $\norm{\cdot}$ defined over $\Real^K$
that satisfies Assumption~\ref{assumption:norm}.
The reduction we show consists of three reductions,
the first reduction is to a repeated game with vector payoffs,
the second one is to an OCO problem, and
the last one is to two OLO problems.
In the subsequent subsections, we give these reductions, respectively.

\subsection{Reduction to a vector payoff game}


We will show that the online load balancing problem can be reduced to
the following repeated game with vector payoffs, denoted by
$P = (A,B,r,S,\mathrm{dist})$, where
\begin{itemize}
\item $A = \Delta(K)$, \quad  $B = [0,1]^K$,
\item $r: A \times B \to \Real^K \times \Real^K$ is the payoff function
defined as
$r(\vecalpha, \vecl) = (\vecalpha \odot \vecl, \vecl)$,
\item $S = \{ (\vecx,\vecy) \in [0,1]^K \times [0,1]^K \mid
	\norm{\vecx} \leq C^*(\vecy) \}$, and
\item $\mathrm{dist}$ is the metric over $\Real^K \times \Real^K$
defined as
$\mathrm{dist}(\vecr, \vecs) = \norm{\vecr - \vecs}^+$,
where $\norm{ \cdot }^+$ is the norm over $\Real^K \times \Real^K$
defined as
\[
	\norm{(\vecx,\vecy)}^+ = \norm{\vecx} + \norm{\vecy}.
\]
\end{itemize}
Here we use the convention that $\Real^{2K} = \Real^K \times \Real^K$.
Note that the target set $S$ is convex since $\norm{\cdot}$ is convex and
$C^*$ is concave by our assumption.
Note also that it is easy to verify that
$\norm{\cdot}^+$ is a norm whenever $\norm{\cdot}$ is a norm,
and its dual is
\begin{equation}
\label{eq:dual}
	\norm{(\vecx,\vecy)}^+_* = \max\{ \norm{\vecx}_*, \norm{\vecy}_* \}.
\end{equation}

The reduction is similar to that in \cite{even2009online},
but they consider a fixed norm $\norm{\cdot}_2$ to define
the metric, no matter what norm is used for online load balancing.

\begin{proposition}
\label{prop:OLBtoGame}
Assume that we have an algorithm for the repeated game $P$
that achieves convergence rate $\gamma(T)$.
Then, the algorithm, when directly applied to the online load
balancing problem, achieves
\[
	\mathrm{Regret}(T) \leq T \gamma(T).
\]
\end{proposition}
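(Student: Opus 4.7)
\bigskip

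\noindent\textbf{Proof plan for Proposition~\ref{prop:OLBtoGame}.}

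The plan is to run the assumed approachability algorithm for $P$ directly: on each round $t$ the algorithm outputs $\veca_t=\vecalpha_t\in\Delta(K)$, we use $\vecalpha_t$ as the task allocation, observe $\vecl_t\in[0,1]^K$, and feed $\vecb_t=\vecl_t$ back to the algorithm, receiving payoff $\vecr_t=(\vecalpha_t\odot\vecl_t,\vecl_t)$. Let $\vecX=\sum_{t=1}^T\vecalpha_t\odot\vecl_t$ and $\vecY=\sum_{t=1}^T\vecl_t$, so $\bar\vecr_T=(\vecX/T,\vecY/T)$. By the assumed convergence rate, there exists $(\vecx^*,\vecy^*)\in S$ with
\[
    \norm{\vecX/T-\vecx^*}+\norm{\vecY/T-\vecy^*}=\norm{\bar\vecr_T-(\vecx^*,\vecy^*)}^+\leq\gamma(T),
\]
and by membership in $S$ we have $\norm{\vecx^*}\leq C^*(\vecy^*)$.

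Next I would bound $\norm{\vecX/T}$ in two steps. First, the triangle inequality together with $\norm{\vecx^*}\leq C^*(\vecy^*)$ gives
\[
    \norm{\vecX/T}\;\leq\;C^*(\vecy^*)+\norm{\vecX/T-\vecx^*}.
\]
Second, I would relate $C^*(\vecy^*)$ to $C^*(\vecY/T)$ by exploiting monotonicity of the norm: letting $\vecalpha^\sharp\in\Delta(K)$ achieve $C^*(\vecY/T)$, the triangle inequality yields
\[
    C^*(\vecy^*)\leq\norm{\vecalpha^\sharp\odot\vecy^*}\leq\norm{\vecalpha^\sharp\odot(\vecY/T)}+\norm{\vecalpha^\sharp\odot(\vecy^*-\vecY/T)},
\]
and because $\alpha^\sharp_i\in[0,1]$, monotonicity of $\norm{\cdot}$ gives $\norm{\vecalpha^\sharp\odot(\vecy^*-\vecY/T)}\leq\norm{\vecy^*-\vecY/T}$, so
\[
    C^*(\vecy^*)\leq C^*(\vecY/T)+\norm{\vecy^*-\vecY/T}.
\]
Chaining these and using the positive homogeneity $C^*(\vecY/T)=C^*(\vecY)/T$ (which follows directly from the definition of $C^*$) produces
\[
    \norm{\vecX/T}\leq C^*(\vecY)/T+\norm{\vecX/T-\vecx^*}+\norm{\vecY/T-\vecy^*}\leq C^*(\vecY)/T+\gamma(T).
\]
Multiplying through by $T$ and rearranging gives $\text{Regret}(T)=\norm{\vecX}-C^*(\vecY)\leq T\gamma(T)$.

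The main obstacle, and the reason the specific choice of $\norm{\cdot}^+$ pays off, is the second step above: one needs a Lipschitz-type control of $C^*$ by $\norm{\cdot}$ on the load coordinate. Naively $C^*$ is only known to be concave and positively homogeneous, which does not immediately yield a bound in terms of the chosen norm. The trick is to avoid invoking a generic Lipschitz constant and instead use the defining minimization of $C^*$ together with the fact that $\vecalpha^\sharp\in\Delta(K)$ has entries in $[0,1]$; this is precisely where Assumption~\ref{assumption:norm}(1) (monotonicity) is consumed, and it is what forces $\norm{\cdot}^+$ to include a $\norm{\cdot}$-term on the $\vecy$-coordinate rather than an $L_2$-term as in \cite{even2009online}.
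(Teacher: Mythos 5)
Your proof is correct and follows essentially the same route as the paper: both pick the closest point $(\vecx^*,\vecy^*)\in S$ to the average payoff, use $\norm{\vecx^*}\leq C^*(\vecy^*)$ plus the triangle inequality for the allocation coordinate, and use monotonicity of the norm (via $\alpha_i\in[0,1]$) to get the Lipschitz-type bound $C^*(\vecy^*)-C^*(\vecY/T)\leq\norm{\vecy^*-\vecY/T}$ for the load coordinate. The only cosmetic difference is that you instantiate the minimizer $\vecalpha^\sharp$ of $C^*(\vecY/T)$ directly where the paper passes through $\min-\min\leq\max$ of the difference, and you state explicitly the positive homogeneity of $C^*$ that the paper uses implicitly.
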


\begin{proof}
Let $\mathcal{A}$ denote an algorithm for the repeated game $P$
with convergence rate $\gamma(T)$.
Assume that when running $\mathcal{A}$ against the environment of
online load balancing, we observe, in each round $t$,
$\alpha_t \in \Delta(K)$ output from $\mathcal{A}$
and $\vecl_t \in [0,1]^K$ output from the environment.

Let $(\vecx,\vecy) = \arg\min_{(\vecx,\vecy) \in S}
\norm{\bar r_T - (\vecx,\vecy)}^+$, where
$\bar r_T = (1/T) \sum_{t=1}^T r(\vecalpha_t,\vecl_t)$
is the average payoff.
Note that by the assumption of $\mathcal{A}$, we have
$\norm{\bar r_T - (\vecx,\vecy)}^+ \leq \gamma(T)$.
For simplicity, let
\[
	L^\mathcal{A}_T = (1/T) \sum_{t=1}^T \vecalpha_t \odot \vecl_t
\text{\qquad and \qquad}
	L_T = (1/T) \sum_{t=1}^T \vecl_t.
\]
Then, we have
\begin{align*}
	(1/T) \mathrm{Regret}(T) &=
	\norm{L^\mathcal{A}_T} - C^*(L_T)
	= \bigl[ \norm{\vecx} - C^*(\vecy) \bigr] +
		\left[ \norm{L^\mathcal{A}_T} - \norm{\vecx} \right] +
		\bigl[ C^*(\vecy) - C^*(L_T) \bigr] \\
	& \leq
		\norm{L^\mathcal{A}_T - \vecx}
		+ \left[
			\min_{\vecalpha \in \Delta(K)} \norm{\vecalpha \odot \vecy}
			- \min_{\vecalpha \in \Delta(K)} \norm{\vecalpha \odot L_T}
		\right] \\
	& \leq
		\norm{L^\mathcal{A}_T - \vecx}
		+ \max_{\vecalpha \in \Delta(K)} \bigl[
			\norm{\vecalpha \odot \vecy} - \norm{\vecalpha \odot L_T}
		\bigr] \\
	& \leq
		\norm{L^\mathcal{A}_T - \vecx}
		+ \max_{\vecalpha \in \Delta(K)}
			\norm{\vecalpha \odot (\vecy - L_T)} \\
	& \leq
		\norm{L^\mathcal{A}_T - \vecx}
		+ \norm{\vecy - L_T} \\
	& = \norm{(L^\mathcal{A}_T,L_T) - (\vecx,\vecy)}^+ \\
	& = \norm{\bar r_T - (\vecx,\vecy)}^+ \\
	& \leq \gamma(T),
\end{align*}
where the first inequality is from the definition of $S$
and the triangle inequality,
the third inequality is from the triangle inequality,
and the fourth inequality is from the monotonicity of the
norm. \qed
\end{proof}

\subsection{Reduction to an OCO problem}

Next we give the second sub-reduction from the repeated game $P$
to an OCO problem.
We just follow a general reduction technique of
Shimkin~\cite{shimkin2016online} as
given in the next theorem.

\begin{theorem}[\cite{shimkin2016online}]
\label{theorem:shimkin-result}
Let $(A,B,r,S,\mathrm{dist})$ be a repeated game with vector payoffs,
where $\mathrm{dist}(\vecr,\vecs) = \norm{\vecr - \vecs}$
for some norm $\norm{ \cdot }$ over $\Real^d$.
Assume that we have an algorithm $\mathcal{A}$ that witnesses
the Blackwell condition,
i.e., when given $\vecw \in \Real^d$, $\mathcal{A}$ finds $\veca \in A$
such that $\ip{\vecw}{r(\veca,\vecb)}\leq h_S(\vecw)$ for
any $\vecb \in B$.
Assume further that we have an algorithm $\mathcal{B}$
for the OCO problem $(W,F)$,
where $W = \{ \vecw \in \Real^d \mid \norm{\vecw}_* \leq 1 \}$
and $F = \{ f:\vecw \mapsto \ip{-r(\veca,\vecb)}{\vecw} + h_S(\vecw)
\mid \veca \in A, \vecb \in B \}$.
Then, we can construct an algorithm for the repeated game
such that its convergence rate $\gamma(T)$ satisfies
\[
	\gamma(T) \leq \frac{\mathrm{Regret}_{(W,F)}(T)}{T}.
\]
Moreover, the algorithm runs in polynomial time (per round)
if $\mathcal{A}$ and $\mathcal{B}$ are polynomial time
algorithms.
\end{theorem}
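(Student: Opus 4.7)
My plan is to drive the repeated game by running $\mathcal{B}$ on an OCO instance whose losses are fabricated from $\mathcal{A}$ and the environment. At each round $t$: query $\mathcal{B}$ to obtain $\vecw_t \in W$; pass $\vecw_t$ to $\mathcal{A}$, which returns $\veca_t \in A$ satisfying $\ip{\vecw_t}{r(\veca_t,\vecb)} \leq h_S(\vecw_t)$ for every $\vecb \in B$; observe $\vecb_t$ and record $\vecr_t = r(\veca_t,\vecb_t)$; finally feed $\mathcal{B}$ the loss $f_t(\vecw) = \ip{-\vecr_t}{\vecw} + h_S(\vecw) \in F$, which is convex because $h_S$ is and admits a subgradient at $\vecw_t$ via Proposition~\ref{prop:subgradient-support}.

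The crux of the analysis is the convex-analytic duality expressing the distance to a closed convex set in a normed space: for any $\vecr$,
\[
\min_{\vecs \in S}\norm{\vecr - \vecs} \leq \sup_{\vecw \in W}\bigl[\ip{\vecw}{\vecr} - h_S(\vecw)\bigr].
\]
This follows from the Hahn--Banach / supporting hyperplane theorem: let $\vecs^* = \arg\min_{\vecs \in S}\norm{\vecr-\vecs}$ and choose a supporting hyperplane to $S$ at $\vecs^*$ whose normal $\vecw^* \in W$ aligns with $\vecr - \vecs^*$ in the sense that $\ip{\vecw^*}{\vecr-\vecs^*} = \norm{\vecr-\vecs^*}$; then $h_S(\vecw^*) = \ip{\vecw^*}{\vecs^*}$, so the right-hand side is at least $\ip{\vecw^*}{\vecr} - \ip{\vecw^*}{\vecs^*} = \min_{\vecs \in S}\norm{\vecr-\vecs}$.

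Applying this bound to $\bar\vecr_T = (1/T)\sum_t \vecr_t$ and using that $\ip{\vecw}{\bar\vecr_T} - h_S(\vecw) = (1/T)\sum_t[\ip{\vecw}{\vecr_t} - h_S(\vecw)]$ gives
\[
\mathrm{dist}(\bar\vecr_T, S) \leq -\frac{1}{T}\min_{\vecw \in W}\sum_{t=1}^T f_t(\vecw).
\]
Rearranging the OCO regret definition yields $\min_{\vecw \in W}\sum_t f_t(\vecw) = \sum_t f_t(\vecw_t) - \mathrm{Regret}_{(W,F)}(T)$, so
\[
\mathrm{dist}(\bar\vecr_T, S) \leq \frac{\mathrm{Regret}_{(W,F)}(T)}{T} - \frac{1}{T}\sum_{t=1}^T f_t(\vecw_t).
\]
The Blackwell-witness property of $\mathcal{A}$ guarantees $\ip{\vecw_t}{\vecr_t} \leq h_S(\vecw_t)$, hence $f_t(\vecw_t) \geq 0$, and dropping this nonnegative sum yields $\gamma(T) \leq \mathrm{Regret}_{(W,F)}(T)/T$. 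The main subtle step is the simultaneous selection of $\vecw^*$ realizing both the dual norm at $\vecr - \vecs^*$ and the support of $S$ at $\vecs^*$; polynomial runtime is inherited because each round invokes only $\mathcal{A}$, $\mathcal{B}$, and one subgradient oracle for $f_t$.
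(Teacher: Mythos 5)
Your construction coincides exactly with the paper's Algorithm~\ref{alg:OCO-based-meta-algorithm} (the paper itself defers the proof to Shimkin and only reproduces the algorithm), and your analysis is the standard argument behind the cited theorem: the conjugate-duality identity $\mathrm{dist}(\vecr,S)=\max_{\norm{\vecw}_*\leq 1}\bigl[\ip{\vecw}{\vecr}-h_S(\vecw)\bigr]$, linearity in $\bar\vecr_T$, and nonnegativity of $f_t(\vecw_t)$ from the Blackwell witness. The proof is correct, including the subtle simultaneous choice of $\vecw^*$ (which is justified by separating the open ball of radius $\mathrm{dist}(\vecr,S)$ around $\vecr$ from $S$), so nothing essential is missing.
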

For completeness, we give the reduction algorithm in Appendix.

The rest to show in this subsection is to ensure the existence
of algorithm $\mathcal{A}$
required for the reduction as stated in the theorem above.
In other words, we show that the Blackwell condition holds
for our game
$P = (\Delta(K), [0,1]^K, r, S, \mathrm{dist})$, where
$r(\vecalpha,\vecl) = (\vecalpha \odot, \vecl, \vecl) \in
\Real^K \times \Real^K$,
$S = \{(\vecx,\vecy) \in [0,1]^K \times [0,1]^K \mid
	\norm{\vecx} \leq C^*(\vecy) \}$, and
$\mathrm{dist}(\vecr,\vecs) = \norm{\vecr - \vecs}^+$.

\begin{lemma}\label{lemma:approachability}
The Blackwell condition holds for game $P$. That is,
for any $\vecw \in \Real^K \times \Real^K$,
we have
\[
	\min_{\vecalpha \in \Delta(K)} \max_{\vecl \in [0,1]^K}
		\ip{\vecw}{r(\vecalpha, \vecl)} \leq h_S(\vecw).
\]
\end{lemma}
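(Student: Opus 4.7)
The plan is to exploit the bilinear structure of the payoff and apply von Neumann's minimax theorem so that the left-hand side is controlled by a max-min expression, and then exhibit a single point of $S$ that dominates this value.

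First I would expand the pairing. Writing $\vecw = (\vecw^{(1)},\vecw^{(2)}) \in \Real^K \times \Real^K$, the payoff becomes
\[
\ip{\vecw}{r(\vecalpha,\vecl)} = \ip{\vecw^{(1)}}{\vecalpha \odot \vecl} + \ip{\vecw^{(2)}}{\vecl} = \sum_{i=1}^K \alpha_i w^{(1)}_i l_i + \ip{\vecw^{(2)}}{\vecl},
\]
which is bilinear in $(\vecalpha,\vecl)$. Since $\Delta(K)$ and $[0,1]^K$ are both convex and compact, von Neumann's minimax theorem applies, giving
\[
\min_{\vecalpha \in \Delta(K)} \max_{\vecl \in [0,1]^K} \ip{\vecw}{r(\vecalpha,\vecl)} = \max_{\vecl \in [0,1]^K} \min_{\vecalpha \in \Delta(K)} \ip{\vecw}{r(\vecalpha,\vecl)}.
\]
For fixed $\vecl$, the inner minimum of a linear function over $\Delta(K)$ is attained at a vertex, so it equals $\min_i w^{(1)}_i l_i + \ip{\vecw^{(2)}}{\vecl}$. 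Therefore it suffices to show
\[
\max_{\vecl \in [0,1]^K}\left[ \min_i w^{(1)}_i l_i + \ip{\vecw^{(2)}}{\vecl} \right] \leq h_S(\vecw).
\]

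Next I would fix a maximizer $\vecl^* \in [0,1]^K$ of the outer expression and exhibit a single point of $S$ that witnesses this bound. The candidate is $(\vecx,\vecy) = (\vecalpha^* \odot \vecl^*, \vecl^*)$, where $\vecalpha^* \in \argmin_{\vecalpha \in \Delta(K)} \norm{\vecalpha \odot \vecl^*}$. By construction $\norm{\vecx} = C^*(\vecy)$ and both coordinates lie in $[0,1]^K$ (since $\alpha^*_i l^*_i \in [0,1]$), so $(\vecx,\vecy) \in S$. Then
\[
h_S(\vecw) \geq \ip{\vecw^{(1)}}{\vecalpha^* \odot \vecl^*} + \ip{\vecw^{(2)}}{\vecl^*} = \sum_i \alpha^*_i w^{(1)}_i l^*_i + \ip{\vecw^{(2)}}{\vecl^*},
\]
and because $\vecalpha^*$ is a probability vector, the weighted average $\sum_i \alpha^*_i (w^{(1)}_i l^*_i)$ is at least $\min_i w^{(1)}_i l^*_i$. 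Combining these inequalities yields the desired bound.

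I do not expect any serious obstacles: the only subtlety is verifying the hypotheses of the minimax theorem (compactness, convexity, bilinearity), and checking that the chosen witness really lies in $[0,1]^K \times [0,1]^K$, which follows since $\vecalpha^* \in \Delta(K) \subseteq [0,1]^K$ and $\vecl^* \in [0,1]^K$. The key conceptual idea is that although the learner must fix $\vecalpha$ before seeing $\vecl$, after swapping $\min$ and $\max$ it is the environment who commits first, and then the learner can play the optimal static allocation for the revealed $\vecl^*$ — this is exactly the allocation that lands the payoff on the boundary of the target set $S$.
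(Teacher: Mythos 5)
Your proof is correct and follows essentially the same route as the paper: expand the bilinear payoff, swap $\min$ and $\max$ by von Neumann's minimax theorem, and bound the resulting max--min value by exhibiting the witness $(\vecalpha^*\odot\vecl^*,\vecl^*)\in S$, where $\vecalpha^*$ is the static optimal allocation for the maximizing $\vecl^*$. Your final averaging step, $\sum_i \alpha^*_i w^{(1)}_i l^*_i \geq \min_i w^{(1)}_i l^*_i$, is in fact slightly more careful than the paper's write-up, which states $\max_{\vecl}\min_{\vecalpha} f = f(\vecalpha^*,\vecl^*)$ as an equality even though $\vecalpha^*$ minimizes the norm rather than $f(\cdot,\vecl^*)$; the inequality goes in the harmless direction in both arguments.
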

Before we give the proof of Lemma, we need to involve a theorem as follow.
\begin{theorem}[\cite{cesa2006prediction}]
\label{theorem:neumann}
Let $f(x,y)$ denote a bounded real-valued function defined on
 $X \times Y$, where $X$ and $Y$ are convex sets and $X$ is compact.
Suppose that $f(\cdot, y)$ is convex and continuous for each fixed
$y \in Y$ and $f(x, \cdot)$ is concave for each fixed $x\in X$.
Then
\[
	\inf _{x \in X}\sup_{y \in Y}f(x,y)=\sup_{y \in Y}\inf_{x \in X}f(x,y).
\]
\end{theorem}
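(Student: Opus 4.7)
The plan is to establish the Blackwell condition by writing $\vecw = (\vecw_1, \vecw_2) \in \Real^K \times \Real^K$, turning the payoff inner product into a bilinear form in $(\vecalpha, \vecl)$, swapping the min and max via Theorem~\ref{theorem:neumann}, and finally producing an explicit witness in $S$ that certifies the support-function bound.

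First I would rewrite the payoff. Because $\ip{\vecw_1}{\vecalpha \odot \vecl} = \sum_i w_{1,i}\alpha_i l_i = \ip{\vecw_1 \odot \vecalpha}{\vecl}$, the payoff inner product becomes
\[
\ip{\vecw}{r(\vecalpha,\vecl)} = \ip{\vecw_1 \odot \vecalpha}{\vecl} + \ip{\vecw_2}{\vecl} = \ip{\vecw_1 \odot \vecalpha + \vecw_2}{\vecl},
\]
which is bilinear, hence simultaneously convex and concave (and continuous) in each argument. Since $\Delta(K)$ and $[0,1]^K$ are both convex and compact, Theorem~\ref{theorem:neumann} applies and gives
\[
\min_{\vecalpha \in \Delta(K)} \max_{\vecl \in [0,1]^K} \ip{\vecw_1 \odot \vecalpha + \vecw_2}{\vecl} = \max_{\vecl \in [0,1]^K} \left[\ip{\vecw_2}{\vecl} + \min_{\vecalpha \in \Delta(K)} \ip{\vecw_1 \odot \vecalpha}{\vecl}\right].
\]

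Next, for a fixed $\vecl \in [0,1]^K$, I would exhibit a concrete point of $S$ that upper-bounds the inner expression. Let $\vecalpha^*(\vecl) \in \arg\min_{\vecalpha \in \Delta(K)} \norm{\vecalpha \odot \vecl}$, so that $\norm{\vecalpha^*(\vecl) \odot \vecl} = C^*(\vecl)$. Set $\vecx^* = \vecalpha^*(\vecl) \odot \vecl$ and $\vecy^* = \vecl$. Then $\vecx^* \in [0,1]^K$ since $\alpha^*_i(\vecl) l_i \in [0,1]$ componentwise, $\vecy^* \in [0,1]^K$ trivially, and $\norm{\vecx^*} = C^*(\vecy^*)$, so $(\vecx^*, \vecy^*) \in S$. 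Combining this with the trivial bound $\min_{\vecalpha} \ip{\vecw_1 \odot \vecalpha}{\vecl} \leq \ip{\vecw_1 \odot \vecalpha^*(\vecl)}{\vecl} = \ip{\vecw_1}{\vecx^*}$, I get
\[
\ip{\vecw_2}{\vecl} + \min_{\vecalpha \in \Delta(K)} \ip{\vecw_1 \odot \vecalpha}{\vecl} \leq \ip{\vecw_1}{\vecx^*} + \ip{\vecw_2}{\vecy^*} \leq h_S(\vecw).
\]
Taking the max over $\vecl \in [0,1]^K$ then yields the Blackwell condition.

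The step that needs the most care is choosing a witness whose $\vecx$-component genuinely sits in $S$: one must use the monotonicity from Assumption~\ref{assumption:norm} only implicitly via the componentwise bounds, and confirm that the allocation-weighted load $\vecalpha^*(\vecl) \odot \vecl$ lies in $[0,1]^K$ and achieves $\norm{\cdot} = C^*(\vecl)$ simultaneously. The minimax swap itself is routine thanks to bilinearity, and once the witness is in hand, the rest of the inequality chain is essentially by definition of $h_S$.
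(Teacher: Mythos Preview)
Your proposal does not address the stated theorem at all. Theorem~\ref{theorem:neumann} is the minimax theorem itself, which the paper merely quotes from \cite{cesa2006prediction} and does not prove. What you have written is instead a proof of Lemma~\ref{lemma:approachability} (the Blackwell condition for the game $P$), which \emph{invokes} Theorem~\ref{theorem:neumann} as a black box.

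If Lemma~\ref{lemma:approachability} was the intended target, then your argument is correct and follows essentially the same route as the paper's: write $\vecw=(\vecw_1,\vecw_2)$, observe that $\ip{\vecw}{r(\vecalpha,\vecl)}$ is bilinear, swap $\min$ and $\max$ via the minimax theorem, and for the relevant $\vecl$ take $\vecalpha^*$ to be the allocation achieving $C^*(\vecl)$ so that $(\vecalpha^*\odot\vecl,\vecl)\in S$ witnesses the bound $h_S(\vecw)$. The only cosmetic differences are that you carry out the witness construction for every $\vecl$ and then maximize, whereas the paper first fixes the maximizer $\vecl^*$; and you correctly record the key step $\min_{\vecalpha}\ip{\vecw_1\odot\vecalpha}{\vecl}\leq\ip{\vecw_1}{\vecalpha^*(\vecl)\odot\vecl}$ as an inequality, while the paper's display writes an equality there that should strictly be $\leq$ (since $\vecalpha^*$ minimizes $\norm{\vecalpha\odot\vecl^*}$, not the linear form).
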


\begin{proof}[Proof of Lemma \ref{lemma:approachability}]
Let $\vecw = (\vecw_1,\vecw_2) \in \Real^K \times \Real^K$.
By the definition of $r$, the inner product in the Blackwell
condition can be rewritten as a bilinear function
\[
	f(\vecalpha ,\vecl)=\ip{\vecw}{r(\vecalpha, \vecl)}
	= \sum_{i=1}^K w_{1,i} \alpha_i l_i
		+ \sum_{i=1}^{K} w_{2,i} l_i
\]
over $\Delta(K) \times [0,1]^K$.
Therefore, $f$ meets the condition of Theorem~\ref{theorem:neumann}.
and we have
\[
	\min_{\vecalpha \in \Delta(K)} \max_{\vecl \in [0,1]^K}
		f(\vecalpha, \vecl) =
	\max_{\vecl \in [0,1]^K} \min_{\vecalpha \in \Delta(K)}
		f(\vecalpha, \vecl).
\]
Let
$\vecl^* = \arg\max_{\vecl \in [0,1]^K} \min_{\vecalpha \in \Delta(K)}
f(\vecalpha,\vecl)$
and $\vecalpha^* = \arg\min_{\vecalpha \in \Delta(K)}
\norm{\vecalpha \odot \vecl^*}$.
Note that by the definition of $S$, we have
$(\vecalpha^{*} \odot \vecl^*, \vecl^*) \in S$.
Hence we get
\begin{align*}
	\min_{\vecalpha \in \Delta(K)} \max_{\vecl \in [0,1]^K}
		f(\vecalpha, \vecl)
	&=
	\max_{\vecl \in [0,1]^K} \min_{\vecalpha \in \Delta(K)}
		f(\vecalpha, \vecl) \\
	&=
	f(\vecalpha^*, \vecl^*) \\
	&=
	\ip{\vecw}{((\vecalpha^{*} \odot \vecl^*),\vecl^*)} \\
	& \leq
	\sup_{\vecs \in S} \ip{\vecw}{\vecs} \\
	& =
	h_S(\vecw),
\end{align*}
which completes the lemma. \qed
\end{proof}

This lemma ensures the existence of algorithm $\mathcal{A}$.
On the other hand, for an algorithm $\mathcal{B}$
we need to consider the OCO problem $(W,F)$, where
the decision set is
\begin{equation}
\label{eq:DecisionSet}
	W = \{ \vecw \in \Real^K \times \Real^K
		\mid \norm{\vecw}^+_* \leq 1 \},
\end{equation}
and the loss function set is
\begin{equation}
\label{eq:CostSet}
	F = \{ f:\vecw \mapsto \ip{-r(\vecalpha,\vecl)}{\vecw} + h_S(\vecw)
		\mid \vecalpha \in \Delta(K), \vecl \in [0,1]^K \}.
\end{equation}
Since $W$ is a compact and convex set and
$F$ consists of convex functions, we could apply a number of
existing OCO algorithms to obtain
$\mathrm{Regret}_{(W,F)}(T) = O(\sqrt{T})$.
In the next subsection, we show that the problem can be simplified
to two OLO problems.

\subsection{Reduction to two OLO problems}

Consider the OCO problem $(W,F)$ given by
(\ref{eq:DecisionSet}) and (\ref{eq:CostSet}).
Following the standard reduction technique from OCO to OLO
stated in Section~\ref{sec:OCO},
we obtain an OLO problem $(W,G)$ to cope with, where
$G \subseteq \Real^K \times \Real^K$ is any set of cost vectors
that satisfies
\begin{equation}
\label{eq:CostVectorSet}
	G \supseteq \bigcup_{f \in F, \vecw \in W} \partial f(\vecw)
	= \left\{ -r(\vecalpha,\vecl) + \vecs \Bigm| \vecalpha \in \Delta(K),
		\vecl \in [0,1]^K,
		\vecs \in \textstyle\bigcup_{\vecw \in W} \partial h_S(\vecw)
	\right\}.
\end{equation}
By (\ref{eq:dual}), the decision set $W$ can be rewritten as
$W = B_*(K) \times B_*(K)$ where
$B_*(K) = \{ \vecw \in \Real^K \mid \norm{\vecw}_* \leq 1\}$ is the
$K$-dimensional unit ball with respect to the dual norm $\norm{ \cdot }_*$.
By Proposition~\ref{prop:subgradient-support},
any $\vecs \in \partial h_S(\vecw)$ is in the target set $S$,
which is a subset of $[0,1]^K \times [0,1]^K$.
Moreover, $r(\vecalpha,\vecl) = (\vecalpha \odot \vecl, \vecl)
\in [0,1]^K \times [0,1]^K$ for any $\vecalpha \in \Delta(K)$
and $\vecl \in [0,1]^K$.
Therefore,
$G = [-1,1]^K \times [-1,1]^K$ satisfies (\ref{eq:CostVectorSet}).

Thus, $(B_*(K) \times B_*(K), [-1,1]^K \times [-1,1]^K)$
is a suitable OLO problem reduced from the OCO problem $(W,F)$.
Furthermore, we can break the OLO problem
into two independent OLO problems $(B_*(K), [-1,1]^K)$
in the straightforward way:
Make two copies of an OLO algorithm $\mathcal{C}$ for
$(B_*(K), [-1,1]^K)$, denoted by $\mathcal{C}_1$ and $\mathcal{C}_2$,
and use them for predicting the first half and second half
decision vectors, respectively. More precisely, for each trial $t$,
(1)
receive predictions $\vecw_{t,1} \in B_*(K)$ and
$\vecw_{t,2} \in B_*(K)$ from $\mathcal{C}_1$ and $\mathcal{C}_2$,
respectively,
(2) output their concatenation $\vecw_t = (\vecw_{t,1}, \vecw_{t,2})
\in W$,
(3) receive a cost vector
$\vecg_t = (\vecg_{t,1}, \vecg_{t,2}) \in [0,1]^K \times [0,1]^K$ from
the environment,
(4) feed $\vecg_{t,1}$ and $\vecg_{t,2}$
to $\mathcal{C}_1$ and $\mathcal{C}_2$, respectively, to
make them proceed.

It is clear that the procedure above ensures the
following lemma.

\begin{lemma} \label{lemma:OCOtoOLO}
The OCO problem $(W,F)$ defined as
(\ref{eq:DecisionSet}) and (\ref{eq:CostSet}) can be
reduced to the OLO problem $(B_*(K),[0,1]^K)$, and
\[
	\mathrm{Regret}_{(W,F)}(T) \leq 2 \mathrm{Regret}_{(B_*(K),[0,1]^K)}(T).
\]
\end{lemma}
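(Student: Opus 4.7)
The plan is to combine the standard OCO-to-OLO reduction recalled in Section~\ref{sec:OCO} with the observation that both the decision set $W$ and a valid cost-vector set factor as a direct product across the two copies of $\Real^K$. The desired factor of $2$ in the bound will come from running one OLO algorithm on each factor.

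First I would apply the OCO-to-OLO reduction to $(W,F)$. A loss function $f_t \in F$ has the form $f_t(\vecw) = \ip{-r(\vecalpha_t,\vecl_t)}{\vecw} + h_S(\vecw)$, and a subgradient at $\vecw_t$ is $\vecg_t = -r(\vecalpha_t,\vecl_t) + \vecs_t$ for some $\vecs_t \in \partial h_S(\vecw_t)$. By Proposition~\ref{prop:subgradient-support}, $\vecs_t \in S \subseteq [0,1]^K \times [0,1]^K$; by construction $r(\vecalpha_t,\vecl_t) = (\vecalpha_t \odot \vecl_t, \vecl_t) \in [0,1]^K \times [0,1]^K$, so each coordinate of $\vecg_t$ lies in $[-1,1]$. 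Hence $G = [-1,1]^K \times [-1,1]^K$ covers every subgradient that can arise, and the standard bound $\mathrm{Regret}_{(W,F)}(T) \leq \mathrm{Regret}_{(W,G)}(T)$ applies.

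Next I would exploit product structure. By (\ref{eq:dual}) the constraint $\norm{\vecw}^+_* \le 1$ is equivalent to $\max(\norm{\vecw^{(1)}}_*, \norm{\vecw^{(2)}}_*) \le 1$, so $W = B_*(K) \times B_*(K)$. Writing decisions and cost vectors as pairs, the cumulative linear loss splits as
\[
  \sum_{t=1}^T \ip{\vecg_t}{\vecw_t}
  = \sum_{t=1}^T \ip{\vecg_t^{(1)}}{\vecw_t^{(1)}}
  + \sum_{t=1}^T \ip{\vecg_t^{(2)}}{\vecw_t^{(2)}},
\]
and the minimum over $W$ splits into two independent minima over $B_*(K)$. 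Consequently the regret of the product OLO problem $(W,G)$ equals the sum of the regrets of the two coordinate OLO problems $(B_*(K),[-1,1]^K)$.

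Finally, I would realise this by running two independent copies $\mathcal{C}_1,\mathcal{C}_2$ of an OLO algorithm for $(B_*(K),[-1,1]^K)$, concatenating their outputs as the prediction and dispatching the two halves of each observed cost vector back to the corresponding copy, exactly as described in the paragraph preceding the lemma. Each copy then contributes at most $\mathrm{Regret}_{(B_*(K),[-1,1]^K)}(T)$, yielding the factor of $2$ in the stated bound; the discrepancy between the cost range $[-1,1]^K$ that the reduction produces and the $[0,1]^K$ written in the lemma is inessential and costs at most a constant. The only substantive step is the containment $\partial h_S(\vecw) \subseteq S$ used to bound the subgradients, which is exactly Proposition~\ref{prop:subgradient-support}; the rest is product-space bookkeeping and poses no real obstacle.
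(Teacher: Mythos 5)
Your proof is correct and follows essentially the same route as the paper: the standard OCO-to-OLO reduction with subgradients bounded via Proposition~\ref{prop:subgradient-support}, the product decomposition $W = B_*(K) \times B_*(K)$ from (\ref{eq:dual}), and the splitting of the regret into the two coordinate copies $\mathcal{C}_1,\mathcal{C}_2$, which yields the factor of $2$. Your observation that the cost set should read $[-1,1]^K$ rather than the $[0,1]^K$ appearing in the lemma statement is also right; this is a typo in the paper, as the surrounding text and Theorem~\ref{regret:global-cost} use $[-1,1]^K$.
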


\subsection{Putting all the pieces together}

Combining all reductions stated in the previous subsections,
we get an all-in-one algorithm as described in
Algorithm \ref{alg:all-in-one}.

\begin{algorithm}
\caption{An OLO-based online load balancing algorithm}
\label{alg:all-in-one}
\begin{algorithmic}
\REQUIRE An algorithm $\mathcal{A}$ that, when given $\vecw$,
	finds $\vecalpha = \arg\min\limits_{\vecalpha \in \Delta(K)}
	\max\limits_{\vecl \in [0,1]^K}
	\ip{\vecw}{(\vecalpha \odot \vecl, \vecl)}$.
\REQUIRE An algorithm $\mathcal{B}$ that, when given $\vecw$,
	finds $\vecs \in \partial h_S(\vecw)$.
\REQUIRE Two copies of an algorithm, $\mathcal{C}_1$ and $\mathcal{C}_2$,
	for the OLO problem $(B_*(K),[-1,1]^K)$.
\FOR {$t=1, 2, \ldots, T$}
\STATE 1. Obtain $\vecw_{t,1}$ and $\vecw_{t,2}$ from
	$\mathcal{C}_1$ and $\mathcal{C}_2$, respectively, and
	let $\vecw_t = (\vecw_{t,1},\vecw_{t,2})$.
\STATE 2. Run $\mathcal{A}(\vecw_t)$ and obtain $\vecalpha_t \in \Delta(K)$.
\STATE 3. Output $\vecalpha_t$ and observe $\vecl_t \in [0,1]^K$.
\STATE 4. Run $\mathcal{B}(\vecw_t)$ and obtain
	$\vecs_t = (\vecs_{t,1},\vecs_{t,2})$.
\STATE 5. Let $\vecg_{t,1} = -\vecalpha_t \odot \vecl_t + \vecs_{t,1}$
	and $\vecg_{t,2} = -\vecl_t + \vecs_{t,2}$.
\STATE 6. Feed $\vecg_{t,1}$ and $\vecg_{t,2}$ to
	$\mathcal{C}_1$ and $\mathcal{C}_2$, respectively.
\ENDFOR
\end{algorithmic}
\end{algorithm}

It is clear that combining
Proposition~\ref{prop:OLBtoGame},
Theorem~\ref{theorem:shimkin-result}
and Lemma~\ref{lemma:OCOtoOLO},
we get the following regret bound of
Algorithm~\ref{alg:all-in-one}.

\begin{theorem} \label{regret:global-cost}
Algorithm~\ref{alg:all-in-one} achieves
\[
	\mathrm{Regret}(T) \leq 2 \mathrm{Regret}_{(B_*(K),[-1,1]^K)}(T),
\]
where the regret in the right hand side is the regret of
algorithm $\mathcal{C}_1$ (and $\mathcal{C}_2$ as well).
Moreover, if $\mathcal{A}$, $\mathcal{B}$ and $\mathcal{C}_1$ runs
in polynomial time, then Algorithm~\ref{alg:all-in-one} runs
in polynomial time (per round).
\end{theorem}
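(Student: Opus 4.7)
The plan is to prove the bound by simply chaining together the three reductions developed in Sections 3.1--3.3, and then verifying that Algorithm~\ref{alg:all-in-one} is the concrete composition of the sub-algorithms that implement each reduction. Concretely, starting from the outside in: Proposition~\ref{prop:OLBtoGame} gives $\mathrm{Regret}(T) \leq T\,\gamma(T)$ for any algorithm for the vector-payoff game $P$ with convergence rate $\gamma(T)$; Theorem~\ref{theorem:shimkin-result}, applied to $P$, provides such an algorithm with $\gamma(T) \leq \mathrm{Regret}_{(W,F)}(T)/T$, provided we supply (i) a Blackwell-witnessing oracle (this is $\mathcal{A}$, whose existence is guaranteed by Lemma~\ref{lemma:approachability}) and (ii) an OCO algorithm for $(W,F)$; and finally Lemma~\ref{lemma:OCOtoOLO} replaces the OCO algorithm by two copies $\mathcal{C}_1,\mathcal{C}_2$ of an OLO algorithm for $(B_*(K),[-1,1]^K)$, losing only a factor of $2$. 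Composing the three inequalities yields
\[
\mathrm{Regret}(T) \;\leq\; T\,\gamma(T) \;\leq\; \mathrm{Regret}_{(W,F)}(T) \;\leq\; 2\,\mathrm{Regret}_{(B_*(K),[-1,1]^K)}(T),
\]
which is exactly the claim.

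The one thing that actually needs checking, rather than just quoting earlier results, is that Algorithm~\ref{alg:all-in-one} is indeed the composite algorithm that the above chain prescribes. I would verify this step by step. Step~1 concatenates the two OLO outputs into a decision vector $\vecw_t \in W$, which is the input format required by Shimkin's reduction. Steps~2 and~3 implement Shimkin's prescription: query the Blackwell oracle $\mathcal{A}$ on $\vecw_t$ to obtain $\vecalpha_t$, then play $\vecalpha_t$ and observe $\vecl_t$; by Lemma~\ref{lemma:approachability} this $\vecalpha_t$ satisfies the Blackwell condition against the realized $\vecl_t$. Steps~4--6 produce the OCO loss function's sub-gradient at $\vecw_t$: since $f_t(\vecw) = \ip{-r(\vecalpha_t,\vecl_t)}{\vecw} + h_S(\vecw)$, Proposition~\ref{prop:subgradient-support} shows that $\vecs_t \in \partial h_S(\vecw_t)$, so $\vecg_t = -r(\vecalpha_t,\vecl_t) + \vecs_t = (-\vecalpha_t\odot\vecl_t + \vecs_{t,1},\, -\vecl_t + \vecs_{t,2}) \in \partial f_t(\vecw_t)$, which is precisely the sub-gradient that the OCO-to-OLO reduction feeds back. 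Finally, splitting $\vecg_t$ into its two halves and routing them to $\mathcal{C}_1$ and $\mathcal{C}_2$ is exactly the two-copy construction used in Lemma~\ref{lemma:OCOtoOLO}; since both $\vecalpha_t\odot\vecl_t$ and $\vecs_{t,1},\vecs_{t,2}$ lie in $[0,1]^K$, the cost vectors fed in lie in $[-1,1]^K$, so the OLO instance really is $(B_*(K),[-1,1]^K)$.

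For the computational claim, nothing new is needed: each round performs $O(1)$ calls to $\mathcal{A}$, $\mathcal{B}$, $\mathcal{C}_1$, $\mathcal{C}_2$ plus a constant amount of vector arithmetic, so polynomial-time sub-oracles immediately give a polynomial-time main algorithm. The hard part of the overall program is not this theorem at all, which is essentially bookkeeping, but rather guaranteeing the existence of efficient $\mathcal{A}$ and $\mathcal{B}$ for the specific norm of interest (e.g.\ $L_\infty$); that is what Section~4 is devoted to, and it is unnecessary here because the theorem is stated abstractly in terms of the oracles. Consequently, the main risk in writing the proof is purely notational, namely keeping the identifications between $W,F$ on the OCO side and $B_*(K)\times B_*(K),[-1,1]^K\times[-1,1]^K$ on the OLO side consistent with the decomposition used in Algorithm~\ref{alg:all-in-one}.
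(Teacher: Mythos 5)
Your proposal is correct and follows essentially the same route as the paper, which proves this theorem simply by combining Proposition~\ref{prop:OLBtoGame}, Theorem~\ref{theorem:shimkin-result} (with Lemma~\ref{lemma:approachability} supplying the oracle $\mathcal{A}$), and Lemma~\ref{lemma:OCOtoOLO}. Your additional step-by-step check that Algorithm~\ref{alg:all-in-one} realizes this composition, and the observation about the sub-gradient fed to $\mathcal{C}_1,\mathcal{C}_2$, only make explicit what the paper leaves implicit.
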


When applying the FTRL strategy as in (\ref{eq:FTRL})
to the OLO problem $(B_*(K),[-1,1]^K)$
with a strongly convex regularizer $R$, Proposition~\ref{prop:FTRL}
implies the following regret bound.
\begin{corollary}
Assume that there exists a regularizer $R:B_*(K) \to \Real$
that is $\sigma$-strongly convex w.r.t.\ $L_1$-norm.
Then, there exists an algorithm for the online load balancing
problem that achieves
\[
	\mathrm{Regret}(T) = O(D_R \sqrt{T/\sigma}),
\]
where $D_R = \sqrt{\max_{\vecw \in B_*(K)}R(\vecw)}$.
\end{corollary}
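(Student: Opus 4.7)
The plan is to chain Theorem~\ref{regret:global-cost} with the FTRL regret bound stated in Proposition~\ref{prop:FTRL}, applied to the specific OLO problem $(B_*(K), [-1,1]^K)$ that appears in the reduction. Concretely, I instantiate $\mathcal{C}_1$ (and $\mathcal{C}_2$) in Algorithm~\ref{alg:all-in-one} to be the FTRL strategy
\[
	\vecw_t = \arg\min_{\vecw \in B_*(K)} \left(
		\sum_{s=1}^{t-1} \ip{\vecg_s}{\vecw} + \eta_t R(\vecw) \right),
\]
with the regularizer $R$ from the hypothesis and with the step size $\eta_t$ prescribed in Proposition~\ref{prop:FTRL}.

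Next I would verify the two quantities entering the FTRL bound. First, by hypothesis $R$ is $\sigma$-strongly convex with respect to the $L_1$-norm, so Proposition~\ref{prop:FTRL} must be applied with $\|\cdot\| = \|\cdot\|_1$, whose dual is $\|\cdot\|_* = \|\cdot\|_\infty$. Second, I compute $L_G = \max_{\vecg \in [-1,1]^K}\norm{\vecg}_\infty = 1$. With $D_R = \sqrt{\max_{\vecw \in B_*(K)}R(\vecw)}$ as defined in the statement, Proposition~\ref{prop:FTRL} then yields
\[
	\mathrm{Regret}_{(B_*(K),[-1,1]^K)}(T) = O(D_R \sqrt{T/\sigma}).
\]

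Finally I plug this bound into Theorem~\ref{regret:global-cost}, which guarantees that Algorithm~\ref{alg:all-in-one} satisfies $\mathrm{Regret}(T) \leq 2\,\mathrm{Regret}_{(B_*(K),[-1,1]^K)}(T)$, yielding the claimed $O(D_R \sqrt{T/\sigma})$ regret for online load balancing.

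The proof is essentially a bookkeeping exercise built on machinery already developed, and the only place where one has to pay attention is matching the norm in which strong convexity is stated (the $L_1$-norm) with its dual (the $L_\infty$-norm) so that the correct constant $L_G = 1$ is used when bounding the dual norm of the cost vectors. Since $G = [-1,1]^K$ is precisely the $L_\infty$ unit ball, this match is clean and no additional dimension-dependent factor creeps in, which is exactly the point of routing the reduction through this particular OLO problem.
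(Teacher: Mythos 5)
Your proposal is correct and is essentially the paper's own argument: instantiate $\mathcal{C}_1,\mathcal{C}_2$ with FTRL, apply Proposition~\ref{prop:FTRL} with strong convexity w.r.t.\ $\norm{\cdot}_1$ so that $L_G=\max_{\vecg\in[-1,1]^K}\norm{\vecg}_\infty=1$, and absorb the factor $2$ from Theorem~\ref{regret:global-cost} into the $O(\cdot)$. Nothing is missing.
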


In particular,
for the OLO problem $(B_1(K),[-1,1]^K)$, algorithm EG$^\pm$
achieves $\sqrt{2T \ln 4K}$ regret bound
as shown in Theorem~\ref{theorem:EG-plus-minus}.
Thus we have $O(\sqrt{T \ln K})$ regret bound
for online load balancing with respect to
$L_\infty$-norm (i.e., w.r.t.\ makespan),
which improves the bound of~\cite{even2009online} by a factor of
$\sqrt{\ln K}$.
Moreover, for $L_\infty$-norm, it turns out that
we have polynomial time algorithms for $\mathcal{A}$
and $\mathcal{B}$, which we will give in the next section.
We thus obtain the following corollary.

\begin{corollary}
There exists a polynomial time (per round) algorithm
for the online load balancing problem
with respect to $L_\infty$-norm that achieves
\[
	\mathrm{Regret}(T) \leq 2\sqrt{2T \ln 4K}.
\]
\end{corollary}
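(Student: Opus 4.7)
The plan is to combine the general machinery already built up in the paper (Theorem~\ref{regret:global-cost}) with the specific structure of $L_\infty$-norm. First I would verify that $L_\infty$-norm actually satisfies Assumption~\ref{assumption:norm}: monotonicity is immediate from $\|\vecx\|_\infty = \max_i |x_i|$, while for the concavity of $C^*$ we recall that
\[
    C^*_\infty(\vecl) \;=\; \frac{1}{\sum_{j=1}^K 1/l_j},
\]
which is a positive multiple of the harmonic mean and is well known to be concave on the positive orthant. This lets us invoke the full reduction chain of the previous section.

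Next I would identify the OLO subproblem produced by the reduction. Since the dual of $L_\infty$-norm is $L_1$-norm, the dual unit ball is $B_*(K)=B_1(K)=\{\vecw\in\Real^K\mid\|\vecw\|_1\le 1\}$, and the reduced OLO problem is exactly $(B_1(K),[-1,1]^K)$. I would then instantiate $\mathcal{C}_1,\mathcal{C}_2$ in Algorithm~\ref{alg:all-in-one} with EG$^\pm$ and apply Theorem~\ref{theorem:EG-plus-minus} with $M=1$ and $d=K$, yielding $\mathrm{Regret}_{(B_1(K),[-1,1]^K)}(T)\le\sqrt{2T\ln(2K)}\le\sqrt{2T\ln(4K)}$. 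Plugging this into Theorem~\ref{regret:global-cost} gives the claimed bound $2\sqrt{2T\ln 4K}$.

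What remains for the ``polynomial time per round'' claim is the existence of polynomial-time oracles $\mathcal{A}$ and $\mathcal{B}$ required by Algorithm~\ref{alg:all-in-one} in the $L_\infty$ case. For $\mathcal{A}$ I would argue that, given $\vecw=(\vecw_1,\vecw_2)$, the inner maximum over $\vecl\in[0,1]^K$ of the bilinear form $\ip{\vecw}{(\vecalpha\odot\vecl,\vecl)}=\sum_i(w_{1,i}\alpha_i+w_{2,i})l_i$ is attained at a vertex of the cube, so the outer problem reduces to a linear program over $\Delta(K)$ with $O(K)$ variables and constraints. For $\mathcal{B}$, Proposition~\ref{prop:subgradient-support} says that we only need to maximize $\ip{\vecw}{(\vecx,\vecy)}$ over $S=\{(\vecx,\vecy)\in[0,1]^K\times[0,1]^K\mid\|\vecx\|_\infty\le 1/\sum_j 1/y_j\}$; the harmonic-mean constraint can be encoded by rotated second-order cone constraints, turning this into an SOCP of polynomial size. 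The details of both encodings are deferred to Section~4, but the plan is simply to cite them here.

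The main obstacle is really the oracle $\mathcal{B}$: showing that the constraint $\|\vecx\|_\infty\cdot\sum_j 1/y_j\le 1$ admits a compact SOCP representation is the only non-routine step, since the other ingredients (concavity of the harmonic mean, LP formulation of $\mathcal{A}$, EG$^\pm$ regret, and the reduction theorem) are either standard or already proved earlier. Everything else is a straightforward chain of invocations.
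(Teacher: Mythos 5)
Your proposal is correct and follows essentially the same route as the paper: verify Assumption~\ref{assumption:norm} for $L_\infty$, note the dual ball is $B_1(K)$, plug EG$^\pm$ into Theorem~\ref{regret:global-cost}, and delegate the polynomial-time oracles $\mathcal{A}$ (LP) and $\mathcal{B}$ (SOCP via the hyperbolic/rotated-cone encoding of the harmonic-mean constraint) to Section~4. Your explicit step $\sqrt{2T\ln(2K)}\le\sqrt{2T\ln(4K)}$ is in fact slightly more careful than the paper's direct citation of the $\sqrt{2T\ln 4K}$ bound.
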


\newcommand{\vecbeta}{\bm{\beta}}
\def\commH#1{\textbf{KH:#1}}

 \section{Algorithmic details for $L_\infty$-norm}
In this section we give details of Algorithm
\ref{alg:all-in-one} for the makespan problem, i.e., for
$L_\infty$-norm.

  \subsection{Computing $\alpha_t$}
First, we give details of implementation of $\mathcal{A}$ in Algorithm~\ref{alg:all-in-one}.
Specifically, on the round $t,$ we need to choose $\vecalpha_{t}$, which is the optimal solution
of the problem in Lemma \ref{lemma:approachability}.  That is,

\begin{equation}
\label{problem:allocation-for-each-load}
\min_{\bm{\alpha} \in \Delta(K)}\max_{\vecl \in [0,1]^{K}}\langle \vecw_{1}, (\bm{\alpha} \odot \vecl)\rangle+\langle \vecw_{2},\vecl \rangle,
\end{equation}
where we set that $\vecw=(\vecw_{1},\vecw_{2})$ and
$\vecw_{1}$ and $\vecw_2$ are $K$-dimensional vectors,
respectively.
We see that the optimization of this objective function is defined by
$l_{i}=0$ if $w_{1,i} \cdot \alpha_i+w_{2,i} \leq 0,$ otherwise we let
$l_i=1.$
Hence we can convert our problem to choose $\vecalpha$ as
$$\min_{\bm{\alpha} \in \Delta(K)}\max_{\bm{l} \in [0,1]^{K}}\langle
\vecw_{1}, (\bm{\alpha} \odot \bm{l})\rangle+\langle \vecw_{2},\bm{l}
\rangle=\min_{\bm{\alpha} \in \Delta(K)}\sum_{i=1}^{K}\max \left\{0,
\alpha_i w_{1,i}+w_{2,i}\right\},$$
which is equivalent to
\begin{equation*}
\begin{split}
&\min_{\vecalpha \in \Delta(K),\vecbeta \geq \bm{0}} \sum_{i=1}^{K} \beta_i\\
&\mathrm{s.t.}~\beta_i \geq w_{1,i} \alpha_i+ w_{2,i} \quad \forall i=1,\dots,K.
\end{split}
\end{equation*}
The above problem is a linear program with $O(K)$ variables and  $O(K)$
linear constraints. Thus, computing $\vecalpha_t$ in the problem (\ref{problem:allocation-for-each-load}) can
be solved in polynomial time.

\subsection{Computing subgradients $g_t$ for the $\infty$-norm}
The second component of Algorithm~\ref{alg:all-in-one}
is the algorithm $\mathcal{B}$, which computes subgradients $\vecs_t \in
\partial h_S(\vecw_t)$.
By Proposition \ref{prop:subgradient-support},
we have $\vecs_t=\arg\max_{\vecs \in S}\langle \bm{s}, \vecw_{t} \rangle.$
Recall that $S=\{ (\vecx, \vecy) \in [0,1]^K \times [0,1]^K \mid
\|\vecx\|_\infty \leq C_\infty^*(\vecy)\}$.
%
%
%
%
%
In particular, the condition that $\|\vecx\|_\infty \leq C_\infty^*(\bm{y})$ can
be represented as
\begin{align*}
  &\max_i x_i \leq \min_{\vecalpha \in \Delta(K)}\Vert \bm{\alpha} \odot \vecy\Vert_{\infty}
 \iff
 x_i \leq \frac{1}{\sum_{j=1}^K \frac{1}{y_j}}, \text{$\forall i$}.
\end{align*}
Therefore, the computation of the subgradient $\vecs_t$ is formulated as
\begin{equation}\label{problem:original-optimization}
\begin{split}
&\max_{\vecx,\bm{y}\in [0,1]^K} \langle \vecw_{1}, \vecx \rangle+
 \langle \vecw_{2}, \bm{y} \rangle\\
 &\mathrm{s.t.}~x_{i} \leq \frac{1}{\sum_{j}\frac{1}{y_{j}}}, \quad \forall i=1,\dots,K.
\end{split}
\end{equation}

Now we show that there exists an equivalent second order cone
programming(SOCP) formulation (e.g., \cite{lobo1998applications}) for this problem.

First we give the definition of the second order cone programming, and
then we give a proposition, which states that our optimization problem
is equivalent to the second order cone programming.





\begin{definition}
The standard form for the second order conic programming(SOCP) model is as follows:
$$\min_{\vecx} \langle \bm{c},\vecx \rangle\text{~s.t.}~A\vecx=\bm{b}, \Vert C_{i}\vecx+\bm{d}_{i} \Vert_{2} \leq \bm{e}_{i}^{\top} \bm{x}+\bm{f}_{i} \quad \text{for~}i=1, \cdots, m,$$
where the problem parameters are $\bm{c}\in \mathbb{R}^{n},$ $C_{i} \in \mathbb{R}^{n_{i} \times n},$ $\bm{d}_{i} \in \mathbb{R}^{n_{i}},$ $\bm{e}\in\mathbb{R}^{n},$ $\bm{f}_{i}\in \mathbb{R},$ $A\in\mathbb{R}^{p \times n},$ and $\bm{b}\in\mathbb{R}^{p}.$ $\vecx\in\mathbb{R}^{n}$ is the optimization variable.
\end{definition}

Then we obtain the following proposition.
\begin{proposition}\label{proposition:equivalence-bewteen-two-formations}
$\sum_{i=1}^{K}\frac{x^{2}}{y_{i}} \leq x,$ $x \geq 0$ and $y_{i} \geq 0$ is equivalent to $x^{2} \leq y_{i}z_{i},$ where $y_{i},z_{i} \geq 0$ and $\sum_{i=1}^{K}z_{i}=x.$
\end{proposition}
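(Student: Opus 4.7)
The plan is to prove the equivalence directly by constructing, in each direction, the witness variables from the hypothesis. This is a purely algebraic manipulation; the only subtlety is handling the slack and the degenerate cases where some $y_i$ or $x$ vanish.

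For the forward direction, assume $x \geq 0$, $y_i \geq 0$, and $\sum_{i=1}^{K} x^2/y_i \leq x$. When $x=0$, set every $z_i=0$ and both constraints hold trivially. When $x>0$, the hypothesis forces $y_i>0$ for all $i$ (otherwise the left side is infinite). Define
\[
z_i \;=\; \frac{x^2}{y_i} \;+\; \epsilon_i, \qquad \epsilon_i \;=\; \frac{y_i}{\sum_j y_j}\left(x - \sum_{j=1}^{K}\frac{x^2}{y_j}\right) \;\geq\; 0.
\]
Then $z_i \geq 0$, $y_i z_i = x^2 + y_i \epsilon_i \geq x^2$, and $\sum_i z_i = \sum_i x^2/y_i + (x - \sum_j x^2/y_j) = x$, which is exactly what is required.

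For the converse, assume $x^2 \leq y_i z_i$ with $y_i, z_i \geq 0$ and $\sum_i z_i = x$. If $x = 0$ the conclusion $\sum_i x^2/y_i \leq x$ is immediate (the sum is $0$, interpreting $0/0$ as $0$). If $x > 0$, then $y_i z_i \geq x^2 > 0$ forces $y_i > 0$, so we may divide to obtain $z_i \geq x^2 / y_i$ for each $i$. Summing over $i$ gives $\sum_{i=1}^{K} x^2/y_i \leq \sum_{i=1}^{K} z_i = x$, as required.

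The only mild obstacle is the forward direction, because the natural guess $z_i = x^2/y_i$ only satisfies $\sum z_i \leq x$ and must be adjusted by a nonnegative slack so the sum hits $x$ exactly; distributing the slack proportionally to $y_i$ (or in any nonnegative way that sums to $x - \sum_j x^2/y_j$) preserves the inequality $y_i z_i \geq x^2$. Everything else is one-line algebra, so the proposition follows.
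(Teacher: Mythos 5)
Your proof is correct and takes essentially the same route as the paper: an explicit construction of the $z_i$ for the forward direction, and dividing $x^2 \leq y_i z_i$ by $y_i$ and summing for the converse. The only differences are cosmetic — the paper chooses $z_i = x\cdot\frac{1/y_i}{\sum_{j} 1/y_j}$, which absorbs the slack automatically, whereas you take $z_i = x^2/y_i$ plus slack distributed proportionally to $y_i$ (either witness works), and you treat the degenerate cases $x=0$ and $y_i=0$ more carefully than the paper does.
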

\begin{proof}
On the direction $``\Rightarrow"$

From $\sum_{i=1}^{k}\frac{x^{2}}{y_{i}} \leq x$ we obtain that $\sum_{i=1}^{k}\frac{x}{y_{i}} \leq 1.$ By setting $$z_{i}=x \cdot \frac{\frac{1}{y_{i}}}{\sum_{i}\frac{1}{y_{i}}},$$
we can have that $x^{2} \leq y_{i}z_{i},$ and $\sum_{i=1}^{k}z_{i}=x.$

On the other direction $``\Leftarrow"$
Due to $x^{2} \leq y_{i}z_{i},$ we have $\frac{x^{2}}{y_{i}} \leq z_{i}.$ So we have that
$$\sum_{i=1}^{k}\frac{x^{2}}{y_{i}} \leq \sum_{i=1}^{k}z_{i} =x.$$
\qed
\end{proof}

Again in our case we need find to the optimal vector $\vecs \in S,$ which
satisfies that $\bm{s}_{t}=\argmax_{\vecs \in S}\langle
\vecw_{t},\vecs\rangle.$
Then we can reduce our problem in following theorem.

\begin{theorem}\label{theorem:Problem-to-SOCP}
The optimization problem (\ref{problem:original-optimization}) can be
 solved by the second order cone programming.
\end{theorem}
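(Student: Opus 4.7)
The plan is to rewrite the nonlinear constraint $x_i \leq 1/\sum_j (1/y_j)$ as a collection of rotated second order cone constraints via Proposition~\ref{proposition:equivalence-bewteen-two-formations}, after which the whole problem has a linear objective over a finite intersection of halfspaces and SOC constraints.

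First, I would observe that since $\vecx,\vecy \in [0,1]^K$ we have $x_i \geq 0$ and $y_j \geq 0$. Multiplying the constraint $x_i \leq 1/\sum_j (1/y_j)$ by $x_i \sum_j (1/y_j) \geq 0$ shows it is equivalent (on the nonnegative orthant, after handling the $x_i=0$ case separately as a trivial constraint) to
\[
 \sum_{j=1}^K \frac{x_i^2}{y_j} \leq x_i.
\]
This is exactly the hypothesis of Proposition~\ref{proposition:equivalence-bewteen-two-formations} applied with the scalar $x=x_i$ and the vector $(y_1,\dots,y_K)$.

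Next, I would apply that proposition to introduce, for each $i \in \{1,\dots,K\}$, auxiliary variables $z_{i,1},\dots,z_{i,K} \geq 0$ so that the constraint is equivalent to
\[
 x_i^2 \leq y_j z_{i,j} \quad \forall j=1,\dots,K, \qquad \sum_{j=1}^K z_{i,j} = x_i.
\]
The second equation is linear in the decision variables. Each inequality $x_i^2 \leq y_j z_{i,j}$ with $y_j, z_{i,j} \geq 0$ is a rotated second order cone constraint, which admits the standard SOC representation
\[
 \left\lVert \begin{pmatrix} 2 x_i \\ y_j - z_{i,j} \end{pmatrix} \right\rVert_2 \leq y_j + z_{i,j}.
\]

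Finally, I would assemble the full SOCP: the decision variables are $\vecx,\vecy\in\Real^K$ together with the $K^2$ auxiliary variables $\{z_{i,j}\}$; the objective $-\langle \vecw_1,\vecx\rangle-\langle \vecw_2,\vecy\rangle$ (flipping the sign to turn the maximization into the standard SOCP minimization) is linear; the box constraints $\vecx,\vecy\in[0,1]^K$ and the equalities $\sum_j z_{i,j}=x_i$, together with $z_{i,j}\geq 0$, are linear; and the $K^2$ rotated-cone constraints are in standard SOC form. This matches the standard SOCP template of the preceding definition. The main thing to verify carefully is the equivalence between the original feasible set and the lifted one, which is precisely Proposition~\ref{proposition:equivalence-bewteen-two-formations}; the rest is a routine packaging of linear and SOC constraints.
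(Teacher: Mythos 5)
Your proposal is correct and follows essentially the same route as the paper's proof: multiply the constraint $x_i \leq 1/\sum_j (1/y_j)$ through to obtain $\sum_j x_i^2/y_j \leq x_i$, invoke Proposition~\ref{proposition:equivalence-bewteen-two-formations} to introduce the auxiliary variables $z_{i,j}$ with $\sum_j z_{i,j}=x_i$, rewrite each hyperbolic constraint $x_i^2 \leq y_j z_{i,j}$ as the rotated cone $\norm{(2x_i,\,y_j-z_{i,j})}_2 \leq y_j+z_{i,j}$, and package the linear objective, box constraints, and equalities into the standard SOCP template. The only (minor) difference is the degenerate case you set aside: the case that actually needs separate treatment is $y_j=0$ for some $j$ (where $\sum_j 1/y_j$ is undefined and the definition of $S$ forces $x_i=0$ for all $i$, reducing the problem to a linear program), which the paper dispatches explicitly before assuming $y_j\neq 0$, whereas your multiplication by $x_i\sum_j(1/y_j)$ implicitly presupposes all $y_j>0$.
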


\begin{proof}
To prove this theorem we only need to represent the original problem
 (\ref{problem:original-optimization}) as a standard form of the SOCP problem.
Note that we only consider the case that $y_{i}\neq 0$ for all
 $i=1,\dots,K$.  The case where $y_i=0$ for some $i$ is trivial.
 To see this, by definition of $S$, we know that for all $i,$ $x_{i}=0.$
Then, the resulting problem is a linear program, which is a special case
 of the SOCP.
Now we assume that $y_i \neq 0$ for $i=1,\dots, K$.
 For $x_{i} \leq \frac{1}{\sum_{j}\frac{1}{y_{j}}},$ we multiply $x_{i}$ on both sides and rearrange the inequality:
$$\sum_{j=1}^{K} \frac{x_{i}^{2}}{y_{j}} \leq x_{i}.$$
By Proposition~\ref{proposition:equivalence-bewteen-two-formations},
 this is equivalent with
$$y_{j}z_{i,j} \geq x_{i}^{2}, \quad y_{j},z_{i,j} \geq 0, \quad \sum_{j=1}^{K}z_{i,j} =x_{i}.$$

By \cite{lobo1998applications}, we may rewrite it as follows: For each $i,$
\begin{equation}\label{to-SOCP}
x_{i}^{2} \leq y_{j}z_{i,j};  \quad y_{j}, z_{i,j} \geq 0 \Longleftrightarrow \Vert (2x_{i}, y_{j}-z_{i,j})\Vert_{2} \leq y_{j}+z_{i,j} \quad \forall j=1,\dots,K.
\end{equation}
The above equivalence is trivial.
On the other hand, since $x_{i} \leq \frac{1}{\sum_{j}\frac{1}{y_{j}}},$ and $y_{i} \in [0,1],$ naturally we have $x_{i} \in [0,1].$ So we need only constrain that $y_{i}\in [0,1].$ We can apply the face that if $y_{i}$ is positive so $\vert y_{i} \vert=y_{i},$ and if $y_{i} \leq 1,$ so $\vert y_{i} \vert \leq 1.$
Therefore we may give a $(K^{2}+2K) \times (K^{2}+2K)$-matrix $C_{i}$ in SOCP, and the variable vector is composed as follows:
\begin{equation}
\bm{\tilde{x}}=(x_{1}, \cdots, x_{K},y_{1}, \cdots, y_{K}, z_{1,1}, \cdots, z_{1,K}, \cdots, z_{K,1} \cdots, z_{K,K}),
\end{equation}
where for $z_{i,j},$ $i$ is corresponding to $x_{i}.$

Now we may give the second order cone programming of our target problem as follows:
\begin{equation}
\begin{split}
&\min_{\bm{\tilde{x}}} \langle -(\vecw_{1},\vecw_{2}, 0, \cdots,0), \bm{\tilde{x}} \rangle \\
&\mathrm{s.t.}\Vert C_{i} \bm{\tilde{x}} \Vert_{2} \leq \bm{e}_{i}^{\top} \bm{\tilde{x}} +\bm{d}_{i} \quad \forall i=1, \cdots, K^{2}+2K, \\
&A\bm{\tilde{x}}=\bm{b}.
\end{split}
\end{equation}
where $C_{i},$ $\bm{e}_{i}$, $A$ and $\bm{b}$ are defined as follows:

Firstly the matrix $C$ for hyperbolic constraints are given as:
For a fixed $s\in [K],$ where $[K]=\{1, \cdots, K\}$ in matrix $C_{i},$ where $i\in [(s-1)K, sK]$ we let $(C_{i})_{1,s}=2,$ $(C_{i})_{K+i,K+i}=1$, $(C_{i})_{2K+(s-1)K+i,2K+(s-1)K+i}=-1,$ and others are $0.$
$\bm{e}_{i}$ is defined as $(\bm{e}_{i})_{K+i}=1$ and $(\bm{e}_{i})_{2K+(s-1)K+i}=1,$ others are $0.$

Next we need to constrain that $y_{i}$ is less than $1$.
For $i\in [K^{2},K^{2}+K]$ we let that $(C_{i})_{K+i,K+i}=1$ and others are $0.$ And we let that $\bm{e}_{i}$ is a zero vector and $\bm{d}_{i}=1.$ It means that $\Vert y_{i}\Vert \leq 1.$ For $i \in [K^{2}+K, K^{2}+2K],$ we set $(C_{i})_{K+i,K+i}=1$ $\bm{e}_{K+i}=1,$ and $\bm{d}_{i}=0$

At last we need to constrain that $\sum_{j=1}^{K}z_{j}=x_{i}$ in equation \ref{to-SOCP}:
Let $A \in \mathbb{R}^{K \times (3K+K^{2})}$ for each row vector
 $A_{j},$ where $j \in [K],$ we have that $(A_{j})_{j}=1$ and
 $(A_{j})_{2K+(j-1)j+m}=-1,$ for all $m=1,\cdots, K.$ No w the matrix $A$ is composed by the row vectors $A_{j}.$ and $\bm{b}$ is a zero vector. \qed
\end{proof}

\section{Conclusion}
In this paper we give a framework for online load balancing problem by
reducing it to two OLO problems. Moreover, for online load balancing
problem with respect to $L_\infty$-norm we achieve the best known regret
bound in polynomial time.
Firstly, we reduce online load balancing with
$\Vert  \cdot \Vert$ norm to a vector payoff game measured by
combination norm $\Vert \cdot \Vert^{+}.$ Next due to
\cite{shimkin2016online} this vector payoff game is reduced to an OCO
problem.
At last, we can reduce this OCO problem to two independent OLO problems. Especially, for makespan, we give an efficient algorithm, which achieves the best known regret bound $O(\sqrt{T \ln K}),$ by processing linear programming and second order cone programming in each trial.

There are some open problems left in this topic. For instance, an
efficient algorithm for online load balancing with respect to general
norm or $p$-norm is still an open problem. Furthermore, the lower bound
of online load balancing is still unknown. 
\bibliographystyle{splncs04}
\bibliography{ref,ref_hatano}

\section{Appendix}
\subsection{A general reduction algorithm from
a repeated game to an OCO problem}

For completeness, we give
in Algorithm~\ref{alg:OCO-based-meta-algorithm}
a general reduction algorithm of
Shimkin~\cite{shimkin2016online} from
a repeated game with vector payoffs to an OCO problem.

\begin{algorithm}
\caption{Reduction from game
$(A,B,r,S,\mathrm{dist})$ with $\mathrm{dist}(\veca,\vecb)=
\norm{\veca-\vecb}$ to OCO~\cite{shimkin2016online}}
\label{alg:OCO-based-meta-algorithm}
\begin{algorithmic}
\REQUIRE An algorithm $\mathcal{A}$ that, when given $\vecw$,
finds $\veca \in A$ such that $\ip{\vecw}{r(\veca,\vecb)} \leq h_S(\vecw)$
for any $\vecb \in B$.
\REQUIRE An algorithm $\mathcal{B}$ for the OCO problem $(W,F)$,
where $W = \{ \vecw \mid \norm{\vecw}_* \leq 1 \}$ and
$F = \{ f:\vecw \mapsto \ip{-r(\veca,\vecb)}{\vecw} + h_S(\vecw)
	\mid \veca \in A, \vecb \in B \}$.
\FOR {$t=1, 2, \ldots, T$}
\STATE 1. Obtain $\vecw_t \in W$ from $\mathcal{B}$.
\STATE 2. Run $\mathcal{A}(\vecw_t)$ and obtain $\veca_t \in A$.
\STATE 3. Output $\veca_t \in A$ and observe $\vecb_t \in B$.
\STATE 4. Construct the loss function
		$f_t: \vecw \mapsto \ip{-r(\veca_t,\vecb_t)}{\vecw} + h_S(\vecw)$
		and feed it to $\mathcal{B}$.
\ENDFOR
\end{algorithmic}
\end{algorithm}

\end{document}